\documentclass[11pt,a4paper,leqno]{article}

\usepackage{amsmath}
\usepackage{amsfonts}
\usepackage{amssymb}
\usepackage{amsthm}

\usepackage{graphicx}
\usepackage{xcolor}
\usepackage{url}
\usepackage{float}



\usepackage[english]{babel}
\usepackage[utf8]{inputenc}

\usepackage[T1]{fontenc}


\setlength{\textwidth}{16.0cm}
\setlength{\oddsidemargin}{0.5cm}
\setlength{\evensidemargin}{0.5cm}

\numberwithin{equation}{section}

\theoremstyle{plain}

\newtheorem{theorem}[subsection]{Theorem}
\newtheorem{proposition}[subsection]{Proposition}
\newtheorem{lemma}[subsection]{Lemma}
\newtheorem{corollary}[subsection]{Corollary}
\newtheorem{conjecture}[subsection]{Conjecture}

\theoremstyle{definition}

\newcommand{\Q}{\mathbb{Q}}
\newcommand{\Z}{\mathbb Z}
\newcommand{\C}{\mathbb C}

\newcommand{\R}{\mathbb{R}}

\title{Plateaux of probability for the expanded quantum infinite well}
\author{Fernando Chamizo
\and
Dulcinea Raboso
\and
Osvaldo P. Santill\'an}
\date{\today}

\begin{document}
\pagestyle{plain}

\maketitle

\begin{abstract}
 If the standard 1D quantum infinite potential well initially in its ground state suffers a sudden expansion, it turns out that in the evolution of the system they may appear plateaux of probability for some fractional times, as noticed by C. Aslangul in 2008. We introduce a mathematical framework to explain this phenomenon. Remarkably, the characterization of these plateaux depends on nontrivial number theoretical considerations.
\end{abstract}

\section{Introduction}
In \cite{aslangul} it is 
considered the basic quantum mechanical setting of the infinite potential well in $[0,a]$
imposing as initial condition the first eigenstate, $$\Psi_0=\sqrt{\frac{2}{a}}\,\sin\frac{\pi x}{a},$$ and a sudden expansion of the well to $[0,\Lambda a]$ (with a fixed $\Lambda>1$) is modeled  preserving this initial condition in $[0,a]$ and putting $\Psi(x,0)=0$ in the rest of the interval $[0,\Lambda a]$.
Changing the units, 
it is clear that one can assume $a=1$, as it is done here in the sequel. Let us generalize a bit the situation admitting the $N$-th eigenstate as initial condition before the expansion. The Schr\"odinger equation ruling the wave function after the expansion is
\begin{equation}\label{schr}
 \begin{cases}
  i\hbar \partial_t\Psi(x,t)=
  -\dfrac{\hbar^2}{2m}\partial_{xx}\Psi(x,t)\quad\text{for $0<x<\Lambda $, \quad$t>0$,}
  \\[6pt]
  \Psi(0,t)=\Psi(\Lambda ,t)=0,\quad \Psi(x,0)=\sqrt{2}\,\theta(1-x)\sin(N\pi x)
 \end{cases}
\end{equation}
where $\theta(x)$ is the Heaviside step function taking the value $1$ if $x\ge 0$ and $0$ if $x<0$.

Using standard techniques, the solution of this equation admits the series expansion
\begin{equation}\label{formula}
 \Psi(x,t)
 =
 \frac{i\Lambda \sqrt{2}}{\pi}
 \sum_{n\in\Z}
 \frac{\sin(\pi n/\Lambda)}{n^2-\Lambda^2}
 e
 \Big(
 \frac{nx}{2\Lambda}-\frac{n^2t}{T}
 \Big)
 \qquad\text{with}\quad
 T=\frac{4m\Lambda^2}{\pi \hbar}
\end{equation}
where here and thereafter it is employed the notation
\[
 e(x)=e^{2\pi ix}
\]
to avoid the continual writing of $2\pi i$ exponents.
If $\Lambda$ is a positive integer then the troublesome~$0/0$ coefficients must be understood as a limit $\Lambda\to |n|$.
\medskip

In \cite{aslangul} it is observed numerically the surprising fact consisting in that for some fractional multiples of $T$ and some values of $\Lambda$, the probability density $|\Psi|^2$ possesses plateaux: It is constant as a function of $x$ in some intervals. In fact, a complete theoretical explanation is provided when $t=T/2$, $t=T/4$ and $t=T/8$ via explicit formulas for $\Psi(x,t)$ for these values of $t$. A number of conjectures are made for other fractional values.
\bigskip

The scheme and purpose of the present paper is as follows. 
In \S2 it is shown that these conjectures can be proved and, in fact, there are explicit formulas for arbitrary fractional times under general initial conditions revealing that 
the Heaviside step function 
for $t=0$ in \eqref{schr} causes
$0$-level plateaux (intervals carrying zero probability) in the solution for $\Lambda$ beyond a certain threshold. Following \cite{aslangul}, this is named the fragmentation case. 
The argument is elementary and general, it is an instance of the quantum Talbot effect. It was applied  in \cite[\S5]{ChSa} to evaluate explicitly a wave function related to quantum revivals on the sphere and it has been employed in other contexts related to dispersive equations. 
For instance, in \cite{olver} the idea is presented in general simple terms and in \cite{OlTs} the case of the linear Korteweg-De Vries equation (a third order dispersive equation) is studied in detail.

On the other hand, the plateaux carrying positive probability for the problem \eqref{schr} are more mysterious and do not parallel the standard theory on the Talbot effect. They are linked to the non fragmentation case and they actually appear, as exemplified in figures~2 and~3 of \S5 depicting the probability function for some choices of the parameters. 
A first step to characterize the plateaux is given in \S3 where it is provided a general analytic criterion to decide if given $x$ there is a plateau in a neighborhood of $x$. 

As suggested before, the fragmentation case is by far simpler. The result in \S4 lists all the plateaux appearing in this case, which correspond to equally spaced segments. This can be obtained easily with the tools of  \S2 and it is also a consequence of the criterion in \S3. 

Our approach to study the non fragmentation case requires some knowledge on cyclotomy that is introduced in \S5. Historically, cyclotomy was the starting point of Galois theory in number fields. Gauss introduced it to determine the regular polygons that can be constructed with ruler and compass. Since then, it has been a recurrent topic in algebraic number theory. 

The main results in the paper are in \S6 where the non fragmentation case is approached. It is proved that for $2N\Lambda$ an odd integer, there is a plateau and it is conjectured that there do not exist plateaux in other cases. This conjecture is proved when the denominator of $t/T$ is prime. Two additional results are stated providing support to the conjecture.

Finally, in \S7 there are some figures corresponding to numerical simulations to exemplify the results. The code to generate them is freely available in 
{\url{https://matematicas.uam.es/~fernando.chamizo/dark/d_plateaux.html}}.
\medskip

It is noticeable how the problem \eqref{schr}, which is a variation of the infinite potential well appearing in any quantum mechanics undergraduate course, leads to highly nontrivial considera\-tions in number theory.  

\section{A general argument and some consequences}

For $\{a_n\}_{n\in\Z}$ such that $\sum_{n\in\Z}|a_n|<\infty$, consider the real variable functions $f: \R^2\longrightarrow\C$ and $g: \R\longrightarrow\C$ given by
\[
 f(x,t)
 =
 \sum_{n\in\Z} a_n e(nx-n^2t)
 \qquad\text{and}\qquad
 g(x)=f(x,0).
\]
The reader will have no trouble in recognizing that this describes the evolution of a free particle with the function $g(x)$ as initial condition. An elementary argument embodied in the following result allows to express $f$ at fractional times in terms of $g$.

\begin{lemma}\label{main}
 Let $a/q$ be an irreducible fraction (with $q\in\Z^+$) then
 \[
  f\big(x,\frac aq\big)
  =
  \frac{1}{q}
  \sum_{k=0}^{q-1}
  G^*(a,k,q) g\big(x+\frac{k}{q}\big)
  \qquad\text{with}\qquad
  G(a,k,q)
  =
  \sum_{\ell=0}^{q-1}
  e\Big(
  \frac{a\ell^2+k\ell}{q}
  \Big).
 \]
\end{lemma}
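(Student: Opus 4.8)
The plan is to substitute the Fourier-type expansion of $g$ into the claimed right-hand side and verify that the coefficient of each frequency $n$ matches $a_n e(-n^2a/q)$, which is exactly the coefficient appearing in $f(x,a/q)$.

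Concretely, first I would write $g(x+k/q)=\sum_{m\in\Z}a_m e(mx)e(mk/q)$, which is legitimate by absolute convergence, and plug this into $\frac1q\sum_{k=0}^{q-1}G^*(a,k,q)\,g(x+k/q)$. Interchanging the (absolutely convergent) sums, the coefficient of $e(mx)$ becomes $\frac{a_m}{q}\sum_{k=0}^{q-1}G^*(a,k,q)\,e(mk/q)$. So the whole statement reduces to the finite identity
\[
 \frac1q\sum_{k=0}^{q-1}\overline{G(a,k,q)}\,e\Big(\frac{mk}{q}\Big)=e\Big(-\frac{m^2a}{q}\Big)
 \qquad\text{for all }m\in\Z.
\]
Next I would expand $G(a,k,q)=\sum_{\ell=0}^{q-1}e((a\ell^2+k\ell)/q)$, take the complex conjugate, and exchange the $k$ and $\ell$ sums on the left: the inner sum over $k$ is $\sum_{k=0}^{q-1}e(k(m-\ell)/q)$, a complete geometric/character sum that equals $q$ when $\ell\equiv m\pmod q$ and $0$ otherwise. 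Since $\ell$ ranges over a complete residue system mod $q$, exactly one value survives, namely $\ell\equiv m$, and it contributes $\overline{e((am^2+0)/q)}\cdot q$ — wait, more carefully: we are conjugating $e((a\ell^2+k\ell)/q)$, so the surviving term is $q\cdot\overline{e(a m^2/q)}=q\,e(-am^2/q)$ (here one uses $a\ell^2\equiv am^2\pmod q$ when $\ell\equiv m$, so only $a m^2$ enters and the value is well defined mod $q$). Dividing by $q$ gives exactly $e(-m^2a/q)$, as required. No coprimality of $a$ and $q$ is needed for this computation; the hypothesis merely guarantees $a/q$ is presented in lowest terms.

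The only genuinely non-routine point is the bookkeeping with the conjugate: one must keep track of whether the Gauss-type sum $G$ or its conjugate $G^*$ should appear in the statement, and correspondingly whether the surviving exponential is $e(+am^2/q)$ or $e(-am^2/q)$. Matching this against $f(x,a/q)=\sum_n a_n e(nx-n^2a/q)$ forces the conjugate on $G$, which is why $G^*$ is the correct choice in the lemma. Everything else — the interchange of summations, the geometric-sum evaluation, the reindexing $\ell\mapsto m\bmod q$ — is standard, so I do not expect a real obstacle; the whole argument is a few lines once the finite identity above is isolated.
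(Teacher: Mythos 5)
Your proposal is correct and follows essentially the same route as the paper: substitute the Fourier expansion of $g$, expand the Gauss sum, and use the orthogonality relation $\sum_{k=0}^{q-1}e\big(\frac{(n-\ell)k}{q}\big)=q\,\delta_{n\equiv\ell}$ to isolate the term $e(-an^2/q)$. Your remarks that coprimality is not actually used in the computation and that $a\ell^2\equiv am^2\pmod q$ makes the surviving exponential well defined are accurate side observations, not deviations from the paper's argument.
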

Here the asterisk indicates the complex conjugation. the notation $G(a,k,q)$ is taken from number theory where these sums are known as (generalized) \emph{quadratic Gauss sums}.  It is not difficult to 
understand how this formula is obtained.   Substituting the definitions in the sum in the statement leads to
\[
 \sum_{k=0}^{q-1}
 G^*(a,k,q) g\big(x+\frac{k}{q}\big)
 =
 \sum_{k=0}^{q-1}
 \sum_{n\in\Z}
 a_n e(nx) e\Big(\frac{nk}{q}\Big)
 \sum_{\ell=0}^{q-1}e\Big(-\frac{a\ell^2+k\ell}{q}\Big).
\]
The last sum  can be rearranged as
\[
 \sum_{n\in\Z}
 a_n e(nx)
 \sum_{\ell=0}^{q-1}e\Big(-\frac{a\ell^2}{q}\Big)
 \sum_{k=0}^{q-1}
 e\Big(\frac{(n-\ell) k}{q}\Big)
 =
 q
 \sum_{n\in\Z}
 a_n e(nx)
 \sum_{\ell=0}^{q-1}e\Big(-\frac{a\ell^2}{q}\Big)\delta_{n\ell}
\]
and the result follows.
\medskip

Although the exact evaluation of $G(a,k,q)$ is intricate and depends on arithmetic topics related to quadratic reciprocity, it is fairly easy to get a formula for its absolute value. For further reference, it is:
\begin{equation}\label{gav}
 \big|G(a,k,q)\big|
 =
 \begin{cases}
  \sqrt{q} &\text{if $q$ is odd,}
  \\
  \sqrt{2q} &\text{if $q$ and $k+\frac q2$ are even,}
  \\
  0 &\text{otherwise.}
 \end{cases}
\end{equation}
The proof reduces to some elementary manipulations in the double sum arising when expanding $G^*(a,k,q)G(a,k,q)$ (see the details in \cite[\S4]{ChSa}).  Concerning the phase of this complex number, it may be proven the following result. The proof provided here has the advantage that  only appeals to elementary known arguments.

\begin{lemma}\label{mains}
The Gauss sums values are 
 \begin{equation}\label{garg}
  G^*(a,k,q)=\sqrt{q}\, e^{i\alpha(a,q)}c_{a/q}(k)
 \end{equation}
 for certain $\alpha(a,q)\in\R$ not depending on $k$ and 
\begin{equation}\label{coef}
  c_{a/q}(k)=
  \begin{cases}
   e\Big(\frac{\overline{4a} k^2}{q}\Big)
   &
   \text{if $q$ is odd},
   \\
   0
   &
   \text{if $q$ is even and $k+\frac q2$ is odd,}
   \\
   \sqrt{2}\,  e\Big(\frac{\overline{a} k^2}{4q}\Big)
   &
   \text{if $q$ and $k+\frac q2$ are even}.
  \end{cases}
\end{equation}
Here the bar indicates the inverse modulo $q$ i.e., for $a$ and $q$ coprime,  $\overline{a}$ is an integer $0<\overline{a}<q$ such that $q$ divides $a\overline{a}-1$.
\end{lemma}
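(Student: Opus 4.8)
The plan is to compute $G^*(a,k,q)$ by reducing the quadratic exponent to a square by completing the square, which is the classical way of expressing a generalized Gauss sum in terms of the ordinary one $G(a,0,q)$. First I would treat the odd $q$ case: since $q$ is odd, $4a$ is invertible modulo $q$, so I may write $a\ell^2+k\ell\equiv \overline{4a}\,(2a\ell+k)^2-\overline{4a}k^2\pmod q$ (here all manipulations are modulo $q$, and $2a$ is invertible so $\ell\mapsto 2a\ell+k$ is a bijection of $\Z/q\Z$). Substituting and factoring out $e(-\overline{4a}k^2/q)$ from the conjugate, the remaining sum over the new variable is $G(\overline{4a},0,q)=G^*(\overline{4a},0,q)^*$, which is a number of modulus $\sqrt q$ independent of $k$; calling its argument $\alpha(a,q)$ gives \eqref{garg} with $c_{a/q}(k)=e(\overline{4a}k^2/q)$, matching \eqref{coef}. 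The reason the $k$-dependence is exactly $e(\overline{4a}k^2/q)$ is that completing the square is an exact identity over $\Z/q\Z$ when $2$ is invertible.

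For even $q$ the same idea works but $4a$ is no longer invertible mod $q$, so I would instead complete the square after clearing denominators by a factor of $4$: write $a\ell^2+k\ell = \overline{a}\,\bigl(a\ell+\tfrac k2\bigr)^2\cdot\!\!$ — more precisely, $4(a\ell^2+k\ell)=(2a\ell+k)^2-k^2$, hence $a\ell^2+k\ell\equiv \overline{a}\,\frac{(2a\ell+k)^2-k^2}{4}\pmod q$ provided $2a\ell+k$ is even, i.e. provided $k$ is even (recall $a$ is odd since $\gcd(a,q)=1$ and $q$ is even). When $k$ is odd, $2a\ell+k$ is always odd and a short parity/character argument — already packaged in the modulus formula \eqref{gav}, which gives $|G|=0$ exactly when $q$ is even and $k+q/2$ is odd — shows the sum vanishes, so $c_{a/q}(k)=0$, consistent with the second line of \eqref{coef} (note $k+q/2$ odd for even $q$ is equivalent to $k$ odd when $q/2$ is even, and to $k$ even when $q/2$ is odd; I will need to track this parity bookkeeping carefully). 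In the surviving case ($q$ and $k+q/2$ both even) I factor out $e(-\overline{a}k^2/(4q))$ from $G^*$, and the residual sum over $\ell$ is again a $k$-independent Gauss sum of modulus $\sqrt{2q}$; absorbing $\sqrt q$ as in \eqref{garg} leaves the factor $\sqrt 2\,e(\overline a k^2/(4q))$, which is the third line of \eqref{coef}.

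The main obstacle I anticipate is the even case bookkeeping: making the completion-of-the-square identity $4(a\ell^2+k\ell)=(2a\ell+k)^2-k^2$ descend correctly from modulus $4q$ to modulus $q$ requires knowing $2a\ell+k$ is even and that division by $4$ is legitimate, which forces the case split on the parity of $k$ (equivalently on whether $k+q/2$ is even) and a careful treatment of how $\overline a$ — the inverse of $a$ modulo $q$ — interacts with the extra factor of $4$ appearing in the denominator $e(\cdot/4q)$. One must also check that the bijection $\ell\mapsto 2a\ell+k$ on $\Z/q\Z$ (odd $q$) respectively on the even residues mod $2q$ (even $q$) is indeed a bijection, so that the residual sum really is the claimed $k$-independent Gauss sum. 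Everything else — pulling out the $\sqrt q$ or $\sqrt{2q}$ using \eqref{gav}, defining $\alpha(a,q)$ as the argument of what remains — is routine, and I would explicitly invoke \eqref{gav} rather than re-derive the moduli.
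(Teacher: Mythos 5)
Your overall route---complete the square to peel off the $k$-dependent phase, observe that the residual sum is a $k$-independent Gauss sum, and take its modulus from \eqref{gav}---is exactly the paper's argument, and your odd-$q$ case is correct as written (you land on $G(\overline{4a},0,q)$ where the paper lands on $G(a,0,q)$, an immaterial difference). The even case, however, has a step that fails as stated. The split must be made on the parity of $k+\frac q2$, not of $k$: when $q\equiv 2\pmod 4$ the surviving terms are those with $k$ \emph{odd} (for instance $G(1,1,2)=2\neq 0$ while $G(1,0,2)=0$), so the plan ``complete the square for $k$ even, show vanishing for $k$ odd'' is backwards for half of the even moduli; your parenthetical records the correct equivalence, but the argument needs to be organized around $k\equiv\frac q2\pmod 2$ from the start, as the paper does by writing $k+\frac q2=2N_k$. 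Also the displayed identity $4(a\ell^2+k\ell)=(2a\ell+k)^2-k^2$ is missing a factor of $a$: the correct one is $4a(a\ell^2+k\ell)=(2a\ell+k)^2-k^2$. The congruence you deduce, $a\ell^2+k\ell\equiv\overline a\,\frac{(2a\ell+k)^2-k^2}{4}\pmod q$, is nevertheless true, and in fact needs no parity hypothesis at all, since $(2a\ell+k)^2-k^2=4a\ell(a\ell+k)$ is always divisible by $4$; the ``obstacle'' you anticipate there is not the real one.

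The real point to nail down is the $k$-independence of the residual sum, and it can be done cleanly: factoring out $e\big(-\frac{\overline a k^2}{4q}\big)$ for every $k$ gives the residual sum $\sum_{\ell\bmod q}e\big(\frac{\overline a(2a\ell+k)^2}{4q}\big)$; the exponent is well defined for $2a\ell+k$ modulo $2q$, and $\ell\mapsto 2a\ell+k$ maps $\Z/q\Z$ bijectively onto the residues mod $2q$ having the parity of $k$. Hence the residual sum depends only on $k\bmod 2$; by \eqref{gav} it vanishes when $k+\frac q2$ is odd and has modulus $\sqrt{2q}$ otherwise, and since all surviving $k$ share the parity of $\frac q2$, it is one and the same constant of modulus $\sqrt{2q}=\sqrt q\cdot\sqrt 2$ on the surviving set. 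This yields \eqref{garg} with the third line of \eqref{coef}, and is the same bookkeeping the paper performs when it reduces to $G^*\big(a,\frac q2,q\big)$ times $k$-independent factors. With these corrections your proposal coincides with the paper's proof.
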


The condition of being coprime is known to be  necessary and sufficient for the inverse to exist. Note that this is the situation considered here and thus this inverse is well defined.

\begin{proof}
If $q$ is odd, then $q$ and $2$ are coprimes and thus $\overline{2}$ exists. By completing squares
 $$
 -(an^2+kn)=-a(n+\overline{2a}k)^2+\overline{4a}k^2\qquad  \text{modulo}\;q.
 $$
  This implies $G^*(a,k,q)=G^*(a,0,q)c_{a/q}(k)$ for $q$ odd. By \eqref{gav}, $\big|G(a,0,q)\big|=\sqrt{q}$ and \eqref{garg} follows. 
 
 If $q$ is even, $G(a,k,q)$ vanishes for $k+q/2$ odd and \eqref{garg} becomes trivial. If $k+q/2$ is even, say $2N_k$, then 
 $$-(an^2+kn)=-a(n+\overline{a}N_k)^2-\frac q2(n+\overline{a}N_k) -\frac{\overline{a}q^2}{16}+\frac{\overline{a}k^2}{4}\qquad \text{modulo}\;q.$$ 
The proof of this claim reduces to expand the latter expression. This implies that
 $$G^*(a,k,q)=G^*\big(a,\frac q2,q\big)e\big(-\frac{\overline{a}q^2}{16}\big)2^{-1/2} c_{a/q}(k),$$ and the factor multiplying to $c_{a/q}(k)$ does not depend on~$k$ and its absolute value is $\sqrt{q}$ by \eqref{gav}, as expected.
\end{proof}

 The proof given above could be replaced by the full evaluation of the Gauss sums but it would be cumbersome and leading to discuss a number of  cases (cf. \cite{HaBe}).

It is direct to see that the infinite series \eqref{formula} describing the wave equation \eqref{formula} solving \eqref{schr} 
simplified as follows for fractional times
\begin{equation}\label{formula2}
 \Psi\big(2\Lambda x,\frac aq T\big)
 =
 f\big(x,\frac aq\big)=
 \frac{\sqrt{2}}{q}
 \sum_{k=0}^{q-1}
  G^*(a,k,q) g\Big(x+\frac{k}{q}\Big)
\end{equation}
where $g(x)=\theta(1-2\Lambda x)\sin(2\pi N\Lambda x)$
for $0\le x\le 1/2$ and it is extended to an odd $1$-periodic function.
A convenient compact alternative expression for $g$ is 
\begin{equation}\label{altg}
 g(x)=
 \begin{cases}
  \sin\big(2\pi N\Lambda(x-x_c)\big)
  &
  \text{if }
  |x-x_c|\le \frac{1}{2\Lambda},
  \\
  0
  &
  \text{if }
  \frac{1}{2\Lambda}<|x-x_c|
 \end{cases}
 \qquad\text{with $x_c$ the nearest integer to $x$.}
\end{equation}
In these terms, it is possible to anticipate the existence of plateaux in certain cases, which is one of the main tasks of the present paper.
The presence of plateaux  follows from the fact that $g(x)$ is supported on an interval of length $\Lambda^{-1}$ in each period and the $q$ translations in the sum  \eqref{formula2} cover a set of measure $\Lambda^{-1}q$. Hence if $\Lambda>q$ the sum must vanish in certain part of each unit interval and there are 0-level plateaux in $[0,\Lambda]$ in which $\Psi\big(x,\frac{aT}{q}\big)=0$. With the language in \cite{aslangul}, this proves that $q$ is a threshold for $\Lambda$ to have ``fragmentation''. This threshold is sharp for $q$ odd because $G(a,k,q)\ne 0$ by \eqref{gav}. If $q$ is even, again by \eqref{gav}, only one half of the terms in \eqref{formula2} contribute to the sum, those with $k$ and $q/2$ having the same parity. Then the fragmentation appears for $\Lambda>q/2$ when $q$ is even.

According to the previous considerations, if $\Lambda>q$ for $q$ odd or $\Lambda>q/2$ for $q$ even the probability density ``fragmentizes'' into blocks of $2N$ peaks (coming from the peaks of $|g|$) separated by forbidden zones  ($0$-level plateaux). The blocks in the extremes can be halved. 
If $q$ is odd, there are $qN$ peaks in total and if $q$ is even, there are $qN/2$. This situation is named fragmentation for obvious reasons.
One of the purposes of the present work is to prove the Theorem \ref{fragmen} given below, which characterizes the exact location of these forbidden zones. In addition some explicit examples are presented in \S5. 

The existence of plateaux was already anticipated in some cases situations in reference \cite{aslangul}. 
Part of its derivation follows from the identities to be described below. It may be an instructive excercise to derive them in terms of the Gauss sums. With this purpose in mind, it is convenient to express the formula \eqref{formula}, 
by making the redefinition  $x\mapsto \frac{x}{2\Lambda}$ and $t\mapsto \frac{t}{T}$ to define
 \[
  F(x,t)=
  f\Big(\frac{x}{2\Lambda},\frac tT\Big)=
 \sum_{n\in\Z} a_n e\Big(\frac{nx}{2\Lambda}-\frac{n^2t}T\Big).
 \]
With this redefinition, the lemma \ref{main} applied to the cases $a/q=1/2,\, 1/4,\, 1/8$ leads to the explicit formulas, which were already found in reference \cite{aslangul} for $t=T/2,\, T/4,\, T/8$. The only assumption is the anti-symmetry of the coefficients, that is, $a_n=-a_{-n}$. 
It is an instructive exercise using the Gauss sums, for this reason it is presented here.

\begin{corollary}\label{exp248}
 If $a_n=-a_{-n}$ then the following relations
 \[
  F(x,\frac{T}{2})=-F(\Lambda-x,0),
 \]
 \[
  F(x,\frac{T}{4})=
  \frac{1-i}{2}F(x,0)
  -
  \frac{1+i}{2}F(\Lambda-x,0),
 \]
 \[
  F(x,\frac{T}{8})=
  \frac{1-i}{2\sqrt{2}}F(x,0)
  +
  \frac{1}{2}F(x+\frac{\Lambda}{2},0)
  +
  \frac{1-i}{2\sqrt{2}}F(\Lambda-x,0)
  -
  \frac{1}{2}F(\frac{\Lambda}{2}-x,0).
 \]
 are valid.
\end{corollary}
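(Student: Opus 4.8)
The plan is to apply Lemma~\ref{main} (in the rescaled form giving $F$) to the three fractions $a/q = 1/2,\ 1/4,\ 1/8$, and then to simplify the resulting finite sums using the antisymmetry hypothesis $a_n = -a_{-n}$ together with the explicit Gauss sum values from Lemma~\ref{mains} (or, where convenient, the elementary closed forms one gets directly for these tiny moduli). Concretely, for a fraction $a/q$ Lemma~\ref{main} gives
\[
 F\big(x,\tfrac aq\big) = \frac1q \sum_{k=0}^{q-1} G^*(a,k,q)\, F\big(x+\tfrac{2\Lambda k}{q},0\big),
\]
after undoing the substitution $x\mapsto x/(2\Lambda)$; so the whole task is to compute the $q$ coefficients $G^*(a,k,q)/q$ for $q\in\{2,4,8\}$ and then package the translates using the symmetry of $g$ (equivalently of $F(\cdot,0)$, which is odd and $2\Lambda$-periodic in its spatial argument once we recall $g$ is odd $1$-periodic).

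First I would dispatch $q=2$, $a=1$: the relevant Gauss sums are $G(1,0,2)=1+i$ and $G(1,1,2)=0$, so only the $k=0$ term may survive --- but wait, by \eqref{gav} with $q=2$ we need $k+q/2$ even, i.e.\ $k$ odd, so in fact $G(1,0,2)=0$ and $G(1,1,2)\neq0$; I would just compute both directly, $G(1,0,2)=1+e(1/2)\cdot\text{(for }\ell^2)=1+(-1)=0$ wait $\ell$ runs $0,1$ so $G(1,0,2)=e(0)+e(1/2)=0$, and $G(1,1,2)=e(0)+e(1)=2$. Hence $F(x,T/2)=\tfrac12 G^*(1,1,2) F(x+\Lambda,0)=F(x+\Lambda,0)$, and since $F(\cdot,0)$ is odd and $2\Lambda$-periodic in the spatial slot, $F(x+\Lambda,0)=F(x-\Lambda,0)=-F(\Lambda-x,0)$, giving the first identity. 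For $q=4$, $a=1$ I would compute $G(1,k,4)$ for $k=0,1,2,3$ (only $k$ even contribute since $q/2=2$ is even, so we need $k$ even): $G(1,0,4)=\sum_{\ell=0}^3 e(\ell^2/4)=1+i+1+i=2(1+i)$ wait $\ell^2 \bmod 4 = 0,1,0,1$ so $=2+2i$, and $G(1,2,4)=\sum e((\ell^2+2\ell)/4)$ with $\ell^2+2\ell\bmod4 = 0,3,0,3$, giving $1 + e(3/4)+1+e(3/4)=2-2i$; then $F(x,T/4)=\tfrac14\big(G^*(1,0,4)F(x,0)+G^*(1,2,4)F(x+\Lambda,0)\big)=\tfrac{1-i}{2}F(x,0)+\tfrac{1+i}{2}F(x+\Lambda,0)$, and replacing $F(x+\Lambda,0)=-F(\Lambda-x,0)$ as before yields the stated second identity. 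The $q=8$ case is the same mechanism: compute $G(1,k,8)$ for the four even $k\in\{0,2,4,6\}$, substitute, and use periodicity/oddness to fold $F(x+3\Lambda/4\cdot\text{stuff},0)$ terms into $F(x,0)$, $F(x+\Lambda/2,0)$, $F(\Lambda-x,0)$, $F(\Lambda/2-x,0)$.

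The only mildly delicate bookkeeping --- and the place I would be most careful --- is the index/translation arithmetic when converting from the $f(x+k/q)$ form to the $F(x+2\Lambda k/q,0)$ form and then reducing the translates modulo the period $2\Lambda$ and exploiting oddness: e.g.\ for $q=8$ the translate by $k=6$ corresponds to shifting the spatial variable by $2\Lambda\cdot 6/8 = 3\Lambda/2 \equiv -\Lambda/2 \pmod{2\Lambda}$, so $F(x+3\Lambda/2,0)=F(x-\Lambda/2,0)=-F(\Lambda/2-x,0)$, which is exactly the last term. Keeping the signs straight through these reflections, and matching the Gauss-sum phases (the overall $e^{i\alpha(a,q)}$ of Lemma~\ref{mains} is harmless since it is an overall constant that must come out as $1$ after normalization by consistency with $t\to0$), is the whole content. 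I expect no real obstacle: this is a finite, elementary computation, and the corollary is essentially a worked example illustrating Lemma~\ref{main}; I would present $q=2$ in full and then say the cases $q=4,8$ are entirely analogous, listing the Gauss sum values in a short display.
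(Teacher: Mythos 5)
Your proposal is correct and follows essentially the same route as the paper: apply Lemma~\ref{main} with $a/q=1/2,\,1/4,\,1/8$, evaluate the quadratic Gauss sums $G(1,k,q)$ directly for these small moduli, and fold the translates using that $F(\cdot,0)$ is odd and $2\Lambda$-periodic (your values for $q=2,4$ and the $k=6$ reflection $F(x+\tfrac{3\Lambda}{2},0)=-F(\tfrac{\Lambda}{2}-x,0)$ for $q=8$ all agree with the paper's computation). The only blemish is the parenthetical claim that the phase $e^{i\alpha(a,q)}$ of Lemma~\ref{mains} ``must come out as $1$'' --- it does not (for $q=2$ it equals $e^{-i\pi/4}$) --- but this is harmless since your argument never actually uses Lemma~\ref{mains}: the phases are already contained in the directly computed Gauss sums, so for $q=8$ you only need to list the four values $G(1,k,8)$ explicitly, exactly as the paper does.
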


The proof of Corollary~\ref{exp248} reduces to apply Lemma~\ref{main}  using that  $a_n=-a_{-n}$ assures that $F(x,0)$ is an odd function.

\begin{proof}
Choosing $q=2$,  $G(1,k,2)=1+e^{\pi i (1+k)}=1-(-1)^k$.
Hence, by Lemma~\ref{main}, $$F(x,\frac{T}{2})=\frac 22 F(x+\Lambda,0),$$ which is equal to $-F(\Lambda-x,0)$ because $F(x,0)$ is odd and $2\Lambda$-periodic.

If $q=2^N$ with $N>1$, $G(a,k,q)=0$ for $k$ odd since \eqref{gav}. A calculation shows $$G(1,k,4)=2+2(-1)^{k/2} i,$$ for $k=0,2$. Then by Lemma~\ref{main}
\[
 F(x,\frac{T}{4})=
 \frac 14 (2-2i)F(x,0)
 +
 \frac 14 (2+2i)F(x+\Lambda,0)
\]
and $F(x+\Lambda,0)= -F(\Lambda-x,0)$ as before. 

The case $q=8$ is similar. By  using that 
$G(1,k,8)=4$ for $k=2,6$ and $$G(1,k,8)=2(-1)^{k/4}\sqrt{2}(1+i)$$ for $k=0,4$, which gives in Lemma~\ref{main}
\[
  F(x,\frac{T}{8})=
  \frac{1-i}{2\sqrt{2}}F(x,0)
  +
  \frac{1}{2}F(x+\frac{\Lambda}{2},0)
  -
  \frac{1-i}{2\sqrt{2}}F(x+\Lambda,0)
  +
  \frac{1}{2}F(x+\frac{3\Lambda}{2},0).
\]
This shows the last equality and completes the proof.
\end{proof}

In \cite[\S4]{aslangul} these relations look more complicated when applied to \eqref{schr} because $F(x,0)$ is not considered as an $2\Lambda$-periodic function and consequently some translations must be introduced to move the arguments to the domain $[0,\Lambda]$ in which \eqref{schr} is posed.  
\medskip

The conjectures in \cite[\S4.3]{aslangul} follow from Lemma~\ref{main} with similar arguments. The following sections are intended to generalize the results of this reference, by use of algebraic and number theoretical methods.

\section{A criterion for the probability density plateaux}
The plateaux  can be seen by studying the probability distribution rather than the wave function itself.
To uniformize the ranges, instead of considering the probability density $|\Psi|^2$ for the problem \eqref{schr}, it is convenient to  re-scale the spatial variable as in \eqref{formula2}.  For an expansion factor $\Lambda$ and a fractional time $aT/q$, consider the normalized probability density
\[
 p(x)
 =
 2\Lambda
 \big|\Psi(2\Lambda x,\frac{aT}{q})\big|^2
 \qquad\text{with}\quad 
 x\in [0,\frac{1}{2}].
\]
The initial $2\Lambda$ factor is introduced to force the total probability condition $\int_0^{1/2} p(x)dx=1$. 

Let us identify a plateau for $p(x)$ with a maximal sub-interval of $[0,\frac{1}{2}]$ in which $p$ is constant. 
Since the considerations in the last section, for $\Lambda$ large, $p(x)$ presents $0$-level plateaux, intervals in which it vanishes. If this is not the case, the numerical experiments show that for $N=1$ and half-integral $\Lambda$, the common situation is the existence of a unique plateau in which $p$ does not vanish. This uniform probability distribution in an interval may sound counterintuitive   because the sine function  coming from the initial condition represented by $g$ in \eqref{formula2} is expected to induce
oscillations everywhere. The following sections are devoted to explain this non expected situation.
\medskip

The goal of this section is to establish a criterion to know if a value of $x$ belongs to a plateau.
This criterion is completely based on the following result that re-writes \eqref{formula2} in a more manageable form. 

\begin{lemma}\label{bformula}
 With the previous notation, 
 \[
  p(x)
  =
  \frac{4\Lambda}q
  \bigg|
  \sum_{k\in I(x)}
   c_{a/q}(k) \sin\Big(2\pi N\Lambda\Big(x-\frac{k}{q}\Big)\Big)
  \bigg|^2.
 \]
 where the interval $I(x)$ given by
 $$
 I(x)
 =
 \Big[
 -\frac{q}{2\Lambda}+xq,
 \frac{q}{2\Lambda}+xq
 \Big].
 $$
 was introduced and the coefficients $c_{a/q}(k)$ are defined in \eqref{coef}.
\end{lemma}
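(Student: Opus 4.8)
The plan is to substitute Lemmas~\ref{main} and~\ref{mains} into the definition of $p(x)$ and then reorganise the resulting finite sum by exploiting periodicity. Starting from $p(x)=2\Lambda|\Psi(2\Lambda x,\frac{aT}{q})|^2$ together with \eqref{formula2}, I would first insert the Gauss sum evaluation of Lemma~\ref{mains}. Since $G^*(a,k,q)=\sqrt q\,e^{i\alpha(a,q)}c_{a/q}(k)$ with the phase $\alpha(a,q)$ independent of $k$, the unimodular factor $e^{i\alpha(a,q)}$ disappears inside $|\cdot|^2$, while the factor $\sqrt q$ combines with the $\frac{\sqrt2}{q}$ in \eqref{formula2}. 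This already gives
\[
 p(x)=\frac{4\Lambda}{q}\bigg|\sum_{k=0}^{q-1}c_{a/q}(k)\,g\Big(x+\frac kq\Big)\bigg|^2,
\]
so it only remains to rewrite the inner sum.

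For this I would use that both $c_{a/q}(\cdot)$ and $k\mapsto g(x+\frac kq)$ are $q$-periodic --- the latter because $g$ is $1$-periodic, the former because $G(a,k,q)$ is (replacing $k$ by $k+q$ multiplies each summand by $e(\ell)=1$), or directly from \eqref{coef} --- and that $c_{a/q}(k)$ is even in $k$, being a function of $k^2$ in every branch of \eqref{coef} (in the even-$q$ case one also uses that the vanishing condition ``$k+\frac q2$ odd'' is invariant under $k\mapsto-k$). Performing the change $k\mapsto-k$ and then letting $k$ run over the complete residue system given by the integers in $(xq-\frac q2,xq+\frac q2]$, the inner sum becomes $\sum c_{a/q}(k)\,g(x-\frac kq)$ over that set. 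Every such $k$ satisfies $x-\frac kq\in[-\frac12,\frac12)$, so its nearest integer is $0$, and \eqref{altg} yields $g(x-\frac kq)=\sin\big(2\pi N\Lambda(x-\frac kq)\big)$ when $|x-\frac kq|\le\frac1{2\Lambda}$ and $g(x-\frac kq)=0$ otherwise.

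Finally I would note that $|x-\frac kq|\le\frac1{2\Lambda}$ is exactly the condition $k\in I(x)$, and that for $\Lambda>1$ the interval $I(x)$ has length $\frac q\Lambda<q$, hence lies inside $(xq-\frac q2,xq+\frac q2]$ and contains no two integers congruent modulo $q$; therefore the integers of $I(x)$ are precisely the indices contributing to the sum, and the claimed formula follows. I expect the only delicate point to be this last bookkeeping step: one must check that the chosen residue system captures all of $I(x)\cap\Z$ without spurious repetitions, and that the periodicity and evenness of $c_{a/q}$ are applied correctly in all three branches of \eqref{coef}. Everything else is routine substitution.
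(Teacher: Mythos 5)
Your proposal is correct and follows essentially the same route as the paper's proof: substitute Lemma~\ref{mains} into \eqref{formula2} so the unimodular phase drops out of the square modulus, use the $q$-periodicity and evenness of $c_{a/q}$ together with the $1$-periodicity of $g$ and the compact formula \eqref{altg}, and identify the support condition $|x-\frac kq|\le\frac1{2\Lambda}$ with $k\in I(x)$. Your bookkeeping via the residue system in $(xq-\frac q2,\,xq+\frac q2]$ is just a streamlined version of the paper's per-term shift by the nearest integer $y_k$, and the final observation that $|I(x)|=q/\Lambda<q$ rules out repetitions exactly as needed.
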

Before presenting the proof, let us say that a value $0<x<\frac{1}{2}$ is \emph{nonsingular} if $q$ is odd and both extreme point of~$I(x)$ are not integers or if $q$ is even and both extreme points plus $\frac{q}{2}$ are not even integers. This condition assures that the  sum of $c_{a/q}(k)f(k)$, with $f$ arbitrary, over the set $I(x)\cap \Z$ is preserved under small variations of $x$ because $I(x)$ does not capture new nonzero terms. The proof then goes as follows.

\begin{proof}
 
 Once \eqref{garg} is proved, the definition of $p(x)$ and \eqref{formula2} give
 \[
  p(x)
  =
  \frac{4\Lambda}{q}
  \Big|
  \sum_{k=0}^{q-1}
  c_{a/q}(k) g(x+\frac{k}{q})\Big|^2
  =
  \frac{4\Lambda}{q}
  \Big|
  \sum_{k=0}^{q-1}
  c_{a/q}(k) g(x-\frac{k}{q})\Big|^2
 \]
 where the second equality comes from the change $k\mapsto -k\pmod{q}$ using that $c_{a/q}(k)$ remains invariant under it. Note that $g(x-\frac{k}{q})$ is well defined for $k$ modulo $q$ because of the $1$-periodicity of $g$. This periodicity also assures that $x-\frac{k}{q}$ can be replaced by $x-\frac{k}{q}-y_k$ where~$y_k$ is the nearest integer to $x-\frac{k}{q}$. After these replacements, by recalling \eqref{altg},  it is deduced that
 \[
  p(x)
  =
  \frac{4\Lambda}{q}
  \Big|
  \sum_{\substack{k=0\\ \big|x-\frac kq-y_k\big|\le \frac{1}{2\Lambda}}}^{q-1}
  c_{a/q}(k) \sin\big(2\pi N\Lambda (x-\frac{k}{q}-y_k)\big)\Big|^2.
 \]
 The proof is completed using the change $k\mapsto k-qy_k$ that leaves $c_{a/q}(k)$ invariant and by realizing that
 that the condition $k\in I(x)$ is equivalent to $|x-\frac kq|\le \frac{1}{2\Lambda}$. 
\end{proof}

After the above lemma has been proved, here goes the criterion for classifying the positions where the plateaux are located. 
\begin{proposition}\label{criterion}
 Let
 \[
  S_{\pm}(x)=
  \sum_{k\in I(x)}
   c_{a/q}(k) e\Big(\pm \frac{N\Lambda k}{q}\Big)
 \]
 where $k$ takes integer values. 
 If $x\in (0,1/2)$ is non singular then the probability density~$p$ 
 is constant in a neighborhood of $x$ if and only if 
 $S_{+}(x)=0$ or $S_{-}(x)=0$. In the first case the constant is $\frac{\Lambda}{q} \big|S_{-}(x)\big|^2$ and in the second it is $\frac{\Lambda}{q} \big|S_{+}(x)\big|^2$.
\end{proposition}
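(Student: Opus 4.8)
The plan is to insert the formula from Lemma~\ref{bformula} into the elementary identity $\sin(2\pi y)=\bigl(e(y)-e(-y)\bigr)/(2i)$ and split the resulting exponential sums. Applying this with $y=N\Lambda(x-k/q)$ and factoring out $e(\pm N\Lambda x)$, the inner sum of Lemma~\ref{bformula} becomes
\[
 \sum_{k\in I(x)}c_{a/q}(k)\sin\Bigl(2\pi N\Lambda\bigl(x-\tfrac kq\bigr)\Bigr)
 =
 \frac{1}{2i}\Bigl(e(N\Lambda x)\,S_-(x)-e(-N\Lambda x)\,S_+(x)\Bigr),
\]
where $S_\pm(x)$ are exactly the sums in the statement. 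Taking modulus squared and absorbing the $1/4$ into the prefactor gives $p(x)=\frac{\Lambda}{q}\bigl|e(N\Lambda x)S_-(x)-e(-N\Lambda x)S_+(x)\bigr|^2$.

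The next step is to freeze the index set using non-singularity. By the remark made just before the proof of Lemma~\ref{bformula}, if $x$ is non-singular then there is a neighbourhood $U$ of $x$ on which $I(y)\cap\Z$ neither acquires nor loses an index $k$ with $c_{a/q}(k)\neq0$ as $y$ ranges over $U$; consequently $S_+$ and $S_-$ take constant values on $U$. Expanding with $|A-B|^2=|A|^2+|B|^2-2\operatorname{Re}(A\overline B)$ then yields, for $y\in U$,
\[
 p(y)=\frac{\Lambda}{q}\Bigl(|S_-(x)|^2+|S_+(x)|^2-2\operatorname{Re}\bigl(e(2N\Lambda y)\,S_-(x)\overline{S_+(x)}\bigr)\Bigr).
\]
Here only the cross term depends on $y$, and it equals $|w|\cos(4\pi N\Lambda y+\arg w)$ with $w=S_-(x)\overline{S_+(x)}$. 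Since $\Lambda>1$ and $N\geq1$, the frequency $4\pi N\Lambda$ is nonzero, so a sinusoid of that frequency is nowhere locally constant unless its amplitude $|w|$ vanishes. Hence $p$ is constant on $U$ if and only if $S_-(x)\overline{S_+(x)}=0$, i.e.\ $S_+(x)=0$ or $S_-(x)=0$. In the case $S_+(x)=0$ the cross term and $|S_+(x)|^2$ both drop out, leaving the constant $\frac{\Lambda}{q}|S_-(x)|^2$, and symmetrically for $S_-(x)=0$; this also supplies the converse, since outside these two cases $p$ is a genuinely oscillating sinusoid near $x$.

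The only part that is not pure bookkeeping is the locally-constant claim for $S_\pm$, that is, checking that the non-singularity condition is exactly what keeps the endpoints of $I(x)$ from pushing a new nonzero term into the sum: for $q$ odd one needs the two endpoints to avoid $\Z$, and for $q$ even one needs the endpoints shifted by $q/2$ to avoid $2\Z$, since $c_{a/q}(k)$ already annihilates the $k$ with $k+q/2$ odd. As this is precisely the content of the observation already recorded before the proof of Lemma~\ref{bformula}, in the final write-up it can simply be cited, and the remaining steps — the $\sin$-to-exponential identity, expanding the modulus square, and the triviality that $\cos$ of a nonzero frequency is not locally constant — are routine.
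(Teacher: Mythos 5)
Your proof is correct and follows essentially the same route as the paper: non-singularity freezes the index set so that $S_{\pm}$ are locally constant, the sine is converted to exponentials to write $p(y)=\frac{\Lambda}{q}\big|e(2N\Lambda y)S_{-}(x)-S_{+}(x)\big|^2$ near $x$, and constancy forces $S_{+}(x)S_{-}(x)=0$. The only (inessential) difference is that you rule out local constancy by expanding the square and noting that a cosine of nonzero frequency is nowhere locally constant, whereas the paper phrases the same fact geometrically as the distance from a fixed point to a moving point on a circle.
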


\begin{proof}[Proof of Proposition~\ref{criterion}]
 As pointed out before, if $x$ is non singular then there exists an open interval $J$ around $x$ such that the set $I(y)\cap \Z$ when $y$ varies in $J$ captures the same integers giving a nonzero contribution to $S_{\pm}(y)$. Hence $S_{\pm}$ are constant in $J$ and Lemma~\ref{bformula} implies
 \[
  p(y)=
  \frac{\Lambda}{q}
  \big|
  S_{-}(x)e(2N\Lambda y)-S_{+}(x)\big|^2
  \qquad\text{for}\quad y\in J,
 \]
the last last equality follows from the elementary identity
$\sin(2\pi t)= \frac{e(-t)}{2i}\big(e(2t)-1\big)$
 If either $S_{-}(x)$ or  $S_{+}(x)=0$ then clearly $p$ is constant in $J$ and the last claim in the statement follows. On the other hand, if $S_{-}(x)S_{+}(x)\ne 0$ then $p$ cannot be locally constant because  the last formula can be interpreted as the distance from a point  $\big(|S_{+}(x)|,0\big)$ to an arc of the centered circle of radius $|S_{-}(x)|$.
\end{proof}
It should be emphasized that last criterion will be essential in the presentation of the following results.

\section{The characterization of the plateaux for the fragmentation case}

Borrowing the terminology from \cite{aslangul}, let us say that there is \emph{fragmentation} if $\Lambda>q$ when $q$ is odd or if $\Lambda>q/2$ when $q$ is even. Using \eqref{formula2} and
following the comments at the end of \S2 it is not difficult to characterize the plateaux in this case. This is collected in the following result.

\begin{theorem}\label{fragmen}
 If there is fragmentation, all the plateaux correspond to zero probability density (forbidden zones). They are intervals centered at $c_m$ with radius $r$ where
 
 \textrm{\rm a)} For $q$ odd, $c_m=\frac 1q(m+1/2)$ with $0\le m<\frac{1}{2}(q+1)$ and $r=\frac{1}{2q}-\frac{1}{2\Lambda}$.
 
 \textrm{\rm b)} For $q$ multiple of $4$,  $c_m=\frac 2q(m+1/2)$ with $0\le m<\frac{1}{4}q$ and $r=\frac{1}{q}-\frac{1}{2\Lambda}$.
 
 \textrm{\rm c)} For $q-2$ multiple of $4$,  $c_m=\frac 2q m$ with $0\le m<\frac{1}{4}(q+2)$ and $r=\frac{1}{q}-\frac{1}{2\Lambda}$.
\end{theorem}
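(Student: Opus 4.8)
The plan is to combine the explicit form of the probability density in Lemma~\ref{bformula} with the criterion of Proposition~\ref{criterion}. The starting point is the elementary observation that the interval $I(x)=[xq-\frac{q}{2\Lambda},xq+\frac{q}{2\Lambda}]$ has length $q/\Lambda$. Hence under fragmentation this length is $<1$ when $q$ is odd and $<2$ when $q$ is even; taking into account that $c_{a/q}(k)=0$ whenever $q$ is even and $k+\frac q2$ is odd (so that, for $q$ even, only integers in a fixed residue class modulo~$2$ carry a nonzero coefficient, and these are spaced $2$ apart), in every case $I(x)$ contains \emph{at most one} integer $k$ with $c_{a/q}(k)\ne 0$. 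I call such a $k$ an \emph{active} index, and for an active index $k_0$ I put $P_{k_0}=\{x:|x-k_0/q|\le\frac1{2\Lambda}\}$; by the shape of $I(x)$ the condition ``$k_0\in I(x)$'' is exactly ``$x\in P_{k_0}$''.

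Next I would run the dichotomy. Fix a nonsingular $x\in(0,1/2)$. If no active index lies in $I(x)$, then $S_{+}(x)=S_{-}(x)=0$, so Proposition~\ref{criterion} gives that $p$ is constant and equal to $0$ in a neighbourhood of $x$. If a (necessarily unique) active index $k_0$ lies in $I(x)$, then $S_{\pm}(x)=c_{a/q}(k_0)e(\pm N\Lambda k_0/q)$ has absolute value $|c_{a/q}(k_0)|\in\{1,\sqrt2\}$, so $S_{+}(x)S_{-}(x)\ne0$ and $p$ is not locally constant; equivalently, Lemma~\ref{bformula} shows $p(x)=\frac{4\Lambda}{q}|c_{a/q}(k_0)|^2\sin^2\!\big(2\pi N\Lambda(x-k_0/q)\big)$ there, a nonconstant function of $x$ on $P_{k_0}$ since $N\ge1$. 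As the singular values of $x$ form a finite set, this already proves that every plateau is a forbidden zone and that the plateaux are precisely the closures of the connected components of $(0,1/2)\setminus\bigcup_{k_0\text{ active}}P_{k_0}$.

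It remains to enumerate these components. Because $\tfrac1\Lambda<\tfrac1q$ for $q$ odd and $\tfrac1\Lambda<\tfrac2q$ for $q$ even, consecutive sets $P_{k_0}$ are disjoint, and between two consecutive active points $\tfrac{k_0}{q}<\tfrac{k_0'}{q}$ the complementary gap is the interval centred at $\tfrac12(\tfrac{k_0}{q}+\tfrac{k_0'}{q})$ with radius $\tfrac12(\tfrac{k_0'}{q}-\tfrac{k_0}{q})-\tfrac1{2\Lambda}$. The set of active points is: all of $\tfrac1q\Z$ when $q$ is odd; $\tfrac2q\Z$ when $4\mid q$ (since then $\tfrac q2$ is even, so active $k$ are even); and $\tfrac1q+\tfrac2q\Z$ when $q\equiv2\pmod4$ (since then $\tfrac q2$ is odd, so active $k$ are odd). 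Substituting the spacings $\tfrac1q,\ \tfrac2q,\ \tfrac2q$ respectively yields the centres $c_m$ and radii $r$ in (a), (b), (c); the admissible ranges of $m$ come from keeping $[c_m-r,c_m+r]\cap[0,1/2]$ nonempty, with the extreme components at $x=0$ and $x=\tfrac12$ being ``halved'' as in the comments of \S2. In the case $q\equiv2\pmod4$ the point $0$ is not active, which is why a (half) forbidden zone sits at the origin and the indexing in (c) starts from $c_0=0$.

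I expect the only real obstacle to be the bookkeeping rather than any conceptual point: carefully matching the index ranges for $m$ (in particular the half-open extreme blocks and the different parities of $q$), checking that $r>0$ — which is precisely where the \emph{strict} fragmentation inequalities $\Lambda>q$ (resp.\ $\Lambda>q/2$) are used, since $r=0$ would collapse a plateau to a single point — and verifying maximality, i.e.\ that $p$ is strictly positive at points arbitrarily close to each endpoint $c_m\pm r$ (which holds because $\sin\big(2\pi N\Lambda(x-k_0/q)\big)$ vanishes at $x=k_0/q\mp\frac1{2\Lambda}$ but not just beyond it). One should also note in passing that adjacent forbidden zones never merge, since each pair is separated by a peak interval $P_{k_0}$ on which $p$ does not vanish identically.
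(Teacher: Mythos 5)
Your proposal is correct and follows essentially the same route as the paper: under fragmentation the interval $I(x)$ captures at most one index with $c_{a/q}(k)\ne 0$, so Proposition~\ref{criterion} forces the dichotomy ``no active index $\Rightarrow$ $0$-level plateau'' versus ``one active index $\Rightarrow$ a single nonzero exponential in $S_{\pm}$, hence no plateau'', and the forbidden zones are the gaps between consecutive active points $k_0/q$. The paper only sketches case a) and leaves b), c) to the reader; your parity analysis of the active indices for $q$ even and the endpoint/maximality bookkeeping are exactly the intended completion.
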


The end intervals are clipped to adjust them to $[0,1/2]$. This clipping only affects to the last interval in a), corresponding to $m=\frac{1}{2}(q+1)$, and to the first interval in c), corresponding to $m=0$. Both are halved, they become 
$\big[\frac 12+\frac{1}{2\Lambda}-\frac{1}{2q},\frac 12\big]$
and 
$\big[0,\frac{1}{q}-\frac{1}{2\Lambda}\big]$, respectively.
\medskip

Just for illustration, let us check how to obtain Theorem~\ref{fragmen}~a) from the criterion in Proposition~\ref{criterion}.  Recalling the definition $I(x) =
 \Big[
 -\frac{q}{2\Lambda}+xq,
 \frac{q}{2\Lambda}+xq
 \Big],$
it is clear that  the fragmentation condition $\Lambda>q$ for $q$ odd implies $|I(x)|<1$.  This means that $I(x)$ contains at most an integer. The proposition~\ref{criterion} shows that a null probability density is characterized by $I(x)=\emptyset$. From the definition of $I(x)$, this is equivalent to impose that the distance of $qx$ to any integer is greater than $\frac{q}{2\Lambda}$. If $m$ is the integral part of $qx$, this means $m+\frac{q}{2\Lambda}<qx<m+1-\frac{q}{2\Lambda}$ and with the notation of the statement, this is the same as $x\in (c_m-r,c_m+r)$. 
The items b) and c) could be obtained in the same way.

In the critical case $\Lambda=q$ when $q$ is odd or $\Lambda=q/2$ when $q$ is even, it is easy to see that $S_{+}(x)$ and $S_{-}(x)$ reduce to a single complex exponential in Proposition~\ref{criterion}, for $x$ nonsingular, and hence there are not plateaux. 

If $\Lambda<q$ when $q$ is odd or if $\Lambda<q/2$ when $q$ is even, we say that there is not fragmentation. This is the difficult case to analyze.

The previous section established the criterion \ref{criterion} for the existence of the plateaux. However, there are further aspects to be analyzed. First, one should determine whether the plateaux correspond to zero or finite probability, if these plateux are unique or not, and how large are them. In addition, the non fragmentation case has to be described further, which is the harder case. The following sections are devoted to these questions.

\section{A result on cyclotomy}

Before entering into the analysis of the nonfragmentation case, it is important present some preliminary results about polynomials. These propositions are well known in algebra, but the intention is to present them here in terms of  the most elementary possible explanations. These are in fact the most technical but simultaneously more interesting tools to be employed along the present work. These results can be readily obtained with basic Galois theory. However, in order to minimize the prerequisites, it is presented here as a consequence of the irreducibility of the cyclotomic polynomial, which is more elementary. 

Given a positive integer $N$  consider the set of integers $n<N$ with has no common divisors with $N$ except the unity. This set will be denoted in the following as
\[
 \mathcal{R}_N=
 \big\{0<n\le N\,:\, \gcd(n,N)=1
 \big\}.
\]
Given this set, the $N$-th cyclotomic polynomial is defined by 
\begin{equation}\label{ciclotimico}
 C_N(x)=\prod_{n\in\mathcal{R}_N}
 \big(x-\zeta^n\big)
 \qquad\text{with}\quad 
 \zeta= e\big(\frac{1}{N}\big). 
\end{equation}

\begin{lemma}\label{irreduci}
The polynomial $C_N(x)$ has rational coefficients, actually integral\footnote{This is a straightforward from $x^N-1= \prod C_{d}(x)$ where $d$ ranges over the divisors of $N$.},  and it is irreducible in $\Q[x]$.
\end{lemma}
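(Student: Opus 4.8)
The plan is to treat the two assertions separately. For the integrality of the coefficients I would just expand on the footnote: arguing by induction on $N$ via the factorization $x^N-1=\prod_{d\mid N}C_d(x)$, the polynomial $C_N(x)$ is the quotient of the monic integer polynomial $x^N-1$ by the monic integer polynomial $\prod_{d\mid N,\,d<N}C_d(x)$, and the quotient of two monic polynomials with integer coefficients (when it lies in $\Q[x]$) again lies in $\Z[x]$ by Gauss's lemma. The substantive part is the irreducibility in $\Q[x]$, and that is where the cyclotomy really enters.

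For irreducibility I would let $f\in\Z[x]$ be the minimal polynomial of $\zeta=e(1/N)$ over $\Q$; by Gauss's lemma $f$ is monic with integer coefficients and $C_N=fg$ for some monic $g\in\Z[x]$. Since $f\mid C_N$, every root of $f$ is a primitive $N$-th root of unity. The goal is to show $g=1$, i.e.\ that conversely every primitive $N$-th root of unity is a root of $f$. Observing that each $m\in\mathcal R_N$ is, as an integer, a product of primes none of which divides $N$, it is enough to prove the following claim: if $\omega$ is a root of $f$ and $p$ is a prime with $p\nmid N$, then $\omega^p$ is a root of $f$. Iterating the claim starting from $\omega=\zeta$ yields that $\zeta^m$ is a root of $f$ for every $m\in\mathcal R_N$, hence $C_N\mid f$, and combined with $f\mid C_N$ this forces $f=C_N$.

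To prove the claim I would argue by contradiction. If $\omega^p$ is not a root of $f$ then, being a primitive $N$-th root of unity (its order is $N/\gcd(N,p)=N$) and hence a root of $C_N=fg$, it is a root of $g$; thus $\omega$ is a root of $g(x^p)$, and since $f$ is the minimal polynomial of $\omega$ we get $f(x)\mid g(x^p)$ in $\Q[x]$, hence in $\Z[x]$ by Gauss's lemma. Reducing modulo $p$ and using the Frobenius identity $\bar g(x^p)=\bar g(x)^p$ in $\F_p[x]$, every irreducible factor of $\bar f$ divides $\bar g(x)^p$ and hence $\bar g$, so $\bar f$ and $\bar g$ share a common factor and $\overline{C_N}=\bar f\,\bar g$ has a repeated irreducible factor in $\F_p[x]$. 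But $\overline{C_N}$ divides $x^N-1$ in $\F_p[x]$, and $x^N-1$ is separable over $\F_p$ because its derivative $Nx^{N-1}$ is coprime to it when $p\nmid N$; this contradiction proves the claim.

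The main obstacle is precisely this modular step: it is essential to descend from $\Q[x]$ to $\Z[x]$ before reducing mod $p$ (which is where Gauss's lemma is used repeatedly), and the whole argument pivots on the separability of $x^N-1$ over $\F_p$, which is exactly the point at which the hypothesis $p\nmid N$ is consumed. Everything else is bookkeeping: verifying that every residue in $\mathcal R_N$ is reached by the iteration, and the elementary fact that a common factor mod $p$ produces a repeated root. One could of course simply cite the irreducibility of the cyclotomic polynomial from a standard algebra text, but the argument above keeps the prerequisites down to Gauss's lemma and the Frobenius endomorphism, in keeping with the stated aim of minimizing machinery.
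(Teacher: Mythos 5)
Your proof is correct, but it is worth noting that the paper does not actually prove Lemma~\ref{irreduci} at all: the integrality is dispatched in a footnote via $x^N-1=\prod_{d\mid N}C_d(x)$ (exactly as in your first paragraph), while the irreducibility is quoted as a classical fact with a pointer to Weintraub's survey of proofs and a historical remark about Gauss's argument for prime $N$. What you supply is the standard Dedekind-style argument: take the minimal polynomial $f$ of $\zeta$, write $C_N=fg$ in $\Z[x]$ via Gauss's lemma, and show that $f$ is stable under $\omega\mapsto\omega^p$ for primes $p\nmid N$ by reducing the divisibility $f(x)\mid g(x^p)$ modulo $p$, invoking Frobenius $\bar g(x^p)=\bar g(x)^p$, and contradicting the separability of $x^N-1$ over $\F_p$; iterating over the prime factorization of each $m\in\mathcal{R}_N$ (every such prime is coprime to $N$) then gives $C_N\mid f$, hence $f=C_N$. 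All the steps check out, including the small but essential points: descending to $\Z[x]$ before reduction mod $p$, the fact that $f$ is the minimal polynomial of every one of its roots, and the use of $p\nmid N$ only through the separability of $x^N-1$. The trade-off is clear: the paper keeps its cyclotomy section short by outsourcing the one genuinely deep input, whereas your version makes the lemma self-contained at the cost of about a page, using only Gauss's lemma and the Frobenius endomorphism --- which is arguably more in the spirit of the paper's stated goal of minimizing prerequisites, since everything downstream (Lemma~\ref{cyclo2}, Proposition~\ref{cyclo}, Corollary~\ref{cyclop}) rests on exactly this irreducibility.
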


Recall that irreducibility in $\Q[x]$ means that the $C_N(x)$
can not be expressed as a product of two non constant polynomials in $\Q[x]$. This is a classic result that admits several elementary, albeit deep, proofs \cite{weintraub}. This fact was already pointed out by Gauss, who gave a beautiful (but involved) proof for prime $N$ when introducing cyclotomy  in his masterpiece \emph{``Disquisitiones arithmeticae''} published in 1801.  

One important application of the cyclotimic polynomials are captured from the following lemma, applied to $\Q[x]$.

\begin{lemma}\label{cyclo2}
Consider a polynomial $C(x)$ which is irreducible in $\Q[x]$. Then any polynomial  $Q(x)$ also in $\Q[x]$, which shares a at least on root with $C(x)$, is divisible by $C(x)$. In other words, all the roots of $C(x)$ are roots of $Q(x)$.
\end{lemma}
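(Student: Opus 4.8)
The plan is to use the division algorithm in $\Q[x]$ together with the irreducibility of $C(x)$. Write $Q(x) = C(x) A(x) + R(x)$ with $A, R \in \Q[x]$ and either $R = 0$ or $\deg R < \deg C$. Let $\rho$ be a common root of $C$ and $Q$. Evaluating the division identity at $x = \rho$ gives $0 = Q(\rho) = C(\rho) A(\rho) + R(\rho) = R(\rho)$, so $\rho$ is also a root of $R$.

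Now the key step: I claim $R = 0$. Suppose not; then $R$ is a nonzero polynomial in $\Q[x]$ of degree strictly less than $\deg C$, and it has $\rho$ as a root. Consider $D(x) = \gcd\big(C(x), R(x)\big)$ computed in $\Q[x]$. Since $\rho$ is a root of both $C$ and $R$, the linear factor $(x - \rho)$ divides both over $\C$, so $D$ is nonconstant (its degree is at least $1$, because a gcd that were constant would force, via a B\'ezout identity $U C + V R = 1$ in $\Q[x]$, the contradiction $0 = 1$ upon substituting $x = \rho$). But $D$ divides $C$ in $\Q[x]$, and $C$ is irreducible, so the only divisors of $C$ in $\Q[x]$ up to nonzero scalars are $1$ and $C$; since $D$ is nonconstant, $D$ is a scalar multiple of $C$, hence $\deg D = \deg C$. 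On the other hand $D$ divides $R$, so $\deg D \le \deg R < \deg C$, a contradiction. Therefore $R = 0$, i.e. $C(x) \mid Q(x)$ in $\Q[x]$.

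Finally, the last sentence of the statement follows at once: if $C(x) \mid Q(x)$, say $Q(x) = C(x) A(x)$, then every root of $C$ (in any extension, in particular every $\zeta^n$ in the cyclotomic case) is a root of $Q$.

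The main obstacle, such as it is, is the justification that $\gcd(C, R)$ is nonconstant — equivalently, that having a common complex root forces a common nontrivial factor over $\Q$. This is exactly where one invokes the B\'ezout identity for the gcd in the principal ideal domain $\Q[x]$: if the gcd were $1$ there would exist $U, V \in \Q[x]$ with $U(x) C(x) + V(x) R(x) = 1$, and substituting the common root $x = \rho$ yields $0 = 1$. Once this is in place the irreducibility of $C$ closes the argument immediately. One should also remark that nothing here uses that $C$ is cyclotomic; only irreducibility over $\Q$ is needed, which is why the lemma is stated for a general irreducible $C(x)$.
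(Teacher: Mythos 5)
Your proof is correct. Note that the paper itself offers no proof of this lemma: it is stated as a standard fact from algebra (the whole point of \S5 being to reduce everything to the irreducibility of the cyclotomic polynomial, for which the paper cites the literature). Your argument -- division with remainder, then the B\'ezout identity in $\Q[x]$ to show that a common complex root forces a nonconstant gcd, and finally irreducibility of $C$ to force the remainder to vanish -- is the standard one and fills this gap correctly. A minor streamlining: you can skip the remainder $R$ altogether and apply the same dichotomy directly to $\gcd(C,Q)$; if $C\nmid Q$ then irreducibility gives $\gcd(C,Q)=1$, so $UC+VQ=1$ for some $U,V\in\Q[x]$, and evaluating at the common root $\rho$ yields $0=1$. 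Your closing remark is also the right one: nothing uses that $C$ is cyclotomic, only irreducibility over $\Q$, which is exactly how the paper later combines this lemma with Lemma~\ref{irreduci}.
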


If this lemma is applied to the irreducible polynomial $C_N(x)$, then it follows that any polynomial $Q(x)$ which shares a single root with a $C_N(x)$, shares all their roots. In other words $Q(\zeta^n)=0$ if $n\in\mathcal{R}_{N}$. This allows to prove powerful results for polynomials  such as the following.

\begin{proposition}\label{cyclo}
Consider $P(x,y)$ a polynomial in two variables with rational coefficients.  Then given the following roots of unity  $\zeta_1=e(\frac{1}{n_1})$  and  $\zeta_2=e(\frac{1}{n_2})$  with $n_1$ and $n_2$ positive integers, if $P(\zeta_1,\zeta_2)=0$  it follows that
 $$
 P\big(\zeta_1^{m_1},\zeta_2^{m_2}\big)=0
 $$
 for any  ${m_1}\in\mathcal{R}_{n_1}$ and ${m_2}\in\mathcal{R}_{n_2}$ satisfying that ${m_1}-{m_2}$ is a multiple of $\gcd(n_1,n_2)$. 
\end{proposition}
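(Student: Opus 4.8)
The plan is to collapse the two-variable statement into a one-variable statement about a single primitive root of unity, to which Lemma~\ref{cyclo2} — and hence the irreducibility of the cyclotomic polynomial, Lemma~\ref{irreduci} — applies directly.

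First I would set $L=\mathrm{lcm}(n_1,n_2)$, $d=\gcd(n_1,n_2)$, and $\zeta=e(1/L)$, a primitive $L$-th root of unity. Since $n_1\mid L$ and $n_2\mid L$, the integers $L/n_1$ and $L/n_2$ are positive and $\zeta_1=\zeta^{L/n_1}$, $\zeta_2=\zeta^{L/n_2}$. Introduce the single-variable polynomial $Q(x)=P\big(x^{L/n_1},x^{L/n_2}\big)$, which lies in $\Q[x]$ because $P$ has rational coefficients. By construction $Q(\zeta)=P(\zeta_1,\zeta_2)=0$, so $Q$ and the irreducible polynomial $C_L(x)$ share the root $\zeta$. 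Lemma~\ref{cyclo2} then gives $C_L(x)\mid Q(x)$, so $Q$ vanishes at every root of $C_L$; that is, $Q(\zeta^m)=0$ for every $m\in\mathcal{R}_L$.

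Second, from the given $m_1\in\mathcal{R}_{n_1}$ and $m_2\in\mathcal{R}_{n_2}$ I would produce an exponent $m\in\mathcal{R}_L$ with $\zeta^{mL/n_1}=\zeta_1^{m_1}$ and $\zeta^{mL/n_2}=\zeta_2^{m_2}$. Because $\zeta$ has order $L$, the identity $\zeta^{mL/n_1}=\zeta^{m_1L/n_1}$ is equivalent to $m\equiv m_1\pmod{n_1}$, and likewise $\zeta^{mL/n_2}=\zeta^{m_2L/n_2}$ is equivalent to $m\equiv m_2\pmod{n_2}$. By the Chinese remainder theorem in its general form, the system $m\equiv m_1\pmod{n_1}$, $m\equiv m_2\pmod{n_2}$ is solvable exactly when $m_1\equiv m_2\pmod d$, which is precisely the hypothesis $d\mid m_1-m_2$; the solution is unique modulo $L$. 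Moreover $\gcd(m,n_1)=\gcd(m_1,n_1)=1$ and $\gcd(m,n_2)=\gcd(m_2,n_2)=1$, and since every prime divisor of $L$ divides $n_1$ or $n_2$, this forces $\gcd(m,L)=1$; choosing the representative in $(0,L]$ gives $m\in\mathcal{R}_L$. Then
$$
P\big(\zeta_1^{m_1},\zeta_2^{m_2}\big)=P\big(\zeta^{mL/n_1},\zeta^{mL/n_2}\big)=Q(\zeta^m)=0,
$$
which is the assertion of the proposition.

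I do not expect a genuine obstacle here: the conceptual content is entirely carried by Lemmas~\ref{irreduci} and~\ref{cyclo2}, and the reduction $P\mapsto Q$ in the first paragraph is the only real idea. The one point deserving care is the bookkeeping in the second paragraph — matching the exponent identities $\zeta^{mL/n_i}=\zeta^{m_iL/n_i}$ with the congruences $m\equiv m_i\pmod{n_i}$, verifying that the CRT solvability condition coincides exactly with the stated divisibility hypothesis, and checking that the resulting $m$ is coprime to $L$ so that it indeed indexes a root of $C_L$.
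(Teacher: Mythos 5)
Your proof is correct and follows essentially the same route as the paper: the same reduction to the one-variable polynomial $Q(x)=P\big(x^{L/n_1},x^{L/n_2}\big)=P\big(x^{n_2/\gcd(n_1,n_2)},x^{n_1/\gcd(n_1,n_2)}\big)$, the same appeal to the irreducibility of the cyclotomic polynomial via Lemma~\ref{cyclo2}, and the same generalized Chinese remainder argument. Your explicit check that the solution $m$ of the congruences is coprime to $L$ is a small point the paper's proof only asserts, so it is a welcome bit of extra care rather than a deviation.
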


\begin{proof}
 The proof relies partially on Lemma~\ref{cyclo2}. Denote $g=\gcd(n_1,n_2)$ and construct in terms in terms of the polynomial $P(x,y)$ the following single variable polynomial  $Q(x)$ 
 $$
Q(x)=P\big(x^{n_2/g},x^{n_1/g}\big).
 $$
 It is clear that $Q(x)\in\Q[x]$. In addition, the value $x=\zeta= e(\frac{1}{N})$ with $N=n_1 n_2 g^{-1}$ converts the last expression into
 $$
Q\Big(e\big(\frac{1}{N}\big)\Big)=P\Big(e\big(\frac{1}{n_1}\big), \;e\big(\frac{1}{n_2}\big)\Big)=P(\zeta_1,\zeta_2)=0,
 $$
 the last equality is simply one of  the hypothesis of the proposition. This means that $\zeta= e(\frac{1}{N})$ is a root of $Q(x)$. On the other hand, $\zeta$ is a zero of $C_N(x)$. Therefore Lemma \ref{cyclo2} implies that $C_N(x)$ divides $Q(x)$
 and furthermore for $n\in \mathcal{R}_{N}$,  it is deduced that
$$Q(\zeta^n)=P\big(\zeta_1^n,\zeta_2^n\big)=0,$$ since $C_N(\zeta^n)=0$.  In these terms the desired property $P\big(\zeta_1^{m_1},\zeta_2^{m_2}\big)=0$ will follow if it can be proved that
 $n\equiv {m_1}\pmod{n_1}$, $n\equiv {m_2}\pmod{n_2}$ since in this case $\zeta_1^n=\zeta_1^{m_1}$ and $\zeta_2^n=\zeta_2^{m_2}$. The hypothesis that ${m_1}-{m_2}$ is a multiple of $\gcd(n_1,n_2)$ precisely ensures that there is  a  solution for $n\in \mathcal{R}_{N}$, by use of  a slight generalization of the Chinese remainder theorem.  The task is to find  $\ell_1,\,\ell_2\in\Z$ such that $n={m_1}+\ell_1 n_1 = {m_2}+\ell_2n_2$ and by B\'ezout's identity the solution of this congruence exists if and only if $g=\gcd(n_1, n_2)$ divides ${m_1}-{m_2}$. The proposition then follows.
\end{proof}

It will be convenient to remark the following consequence of this result:

\begin{corollary}\label{cyclop}
 Let $\zeta_1$, $\zeta_2$ and $P$ be as in Proposition~\ref{cyclo}. If $\gcd(n_1,n_2)=1$ and the degree of $P$ in the first variable is less than $\deg C_{n_1}$ then $P\big(x,\zeta_2^{e_2}\big)$ is identically zero for every $e_2\in\mathcal{R}_{n_2}$.
\end{corollary}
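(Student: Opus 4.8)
The plan is to derive Corollary~\ref{cyclop} directly from Proposition~\ref{cyclo} by exploiting the fact that, when $\gcd(n_1,n_2)=1$, the divisibility condition ``$m_1-m_2$ is a multiple of $\gcd(n_1,n_2)$'' becomes vacuous, so Proposition~\ref{cyclo} grants us the maximal possible freedom in choosing exponents independently. First I would fix $e_2\in\mathcal{R}_{n_2}$ and consider the single-variable polynomial $R(x)=P(x,\zeta_2^{e_2})\in\Q(\zeta_2)[x]$; the goal is to show $R\equiv 0$. Since we are told $P(\zeta_1,\zeta_2)=0$, an application of Proposition~\ref{cyclo} with $m_2=e_2$ (and any $m_1\in\mathcal{R}_{n_1}$, legitimate because $\gcd(n_1,n_2)=1$ imposes no constraint linking $m_1$ and $m_2$) shows that $P(\zeta_1^{m_1},\zeta_2^{e_2})=0$ for every $m_1\in\mathcal{R}_{n_1}$. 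In other words, $R$ vanishes at all the primitive $n_1$-th roots of unity, i.e.\ at all $\deg C_{n_1}$ roots of $C_{n_1}(x)$.

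Next I would invoke a divisibility argument in the style of Lemma~\ref{cyclo2}: the polynomial $C_{n_1}(x)$ has $\deg C_{n_1}$ distinct roots, all of which are roots of $R(x)$, so either $R$ is the zero polynomial or $\deg R \ge \deg C_{n_1}$. But by hypothesis $\deg_x P < \deg C_{n_1}$, hence $\deg R \le \deg_x P < \deg C_{n_1}$, which forces $R\equiv 0$. This is exactly the assertion that $P(x,\zeta_2^{e_2})$ is identically zero, and since $e_2\in\mathcal{R}_{n_2}$ was arbitrary, the corollary follows. One should note that the coefficient field here is $\Q(\zeta_2)$ rather than $\Q$, but this causes no difficulty: the only facts used are that a nonzero polynomial of degree $d$ over a field has at most $d$ roots, and that the $\deg C_{n_1}$ primitive $n_1$-th roots of unity are distinct; neither requires the coefficients to be rational.

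The main (very mild) obstacle is simply making sure the ``any $m_1$'' clause is legitimate, which hinges on $\gcd(n_1,n_2)=1$ making the congruence ``$g \mid m_1-m_2$'' trivially satisfiable for every choice of $m_1$; this is precisely where the coprimality hypothesis of the corollary is consumed. A secondary point worth a sentence is to confirm that $R$ genuinely has the $\deg C_{n_1}$ primitive roots $\zeta_1^{m_1}$, $m_1\in\mathcal{R}_{n_1}$, as \emph{distinct} roots — but these are all the primitive $n_1$-th roots of unity, so there are exactly $\varphi(n_1)=\deg C_{n_1}$ of them and they are pairwise distinct, closing the counting argument.
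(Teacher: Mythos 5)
Your proposal is correct and follows essentially the same route as the paper: apply Proposition~\ref{cyclo} with the $\gcd(n_1,n_2)=1$ hypothesis rendering the congruence condition vacuous, so that $P\big(x,\zeta_2^{e_2}\big)$ vanishes at all $\varphi(n_1)=\deg C_{n_1}$ primitive $n_1$-th roots of unity, then conclude by a degree comparison. Your phrasing via counting distinct roots of a polynomial over the field $\Q(\zeta_2)$ is just a mild repackaging of the paper's statement that $C_{n_1}$ divides $P\big(x,\zeta_2^{e_2}\big)$ (and your remark about the coefficient field is a fair point of care), but it is not a different argument.
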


\begin{proof}
 In view of Lemma \ref{cyclo2} and Proposition \ref{cyclo}, this corollary is almost obvious. The cyclotomic polynomial $C_{n_1}(x)$ divides $P\big(x,\zeta_2^{m_2}\big)$ because $x=\zeta_1^{m_1}$ with $m_1\in\mathcal{R}_{n_1}$ are zeros of the latter by Proposition~\ref{cyclo}.  This clearly leads to a contradiction if $P$ is not identically zero and $\deg C_{n_1}(x)>\deg P\big(x,\zeta_2^{e_2}\big)$.
\end{proof}

\section{Plateaux for the non fragmentation case}

To study the location of the plateaux for the non fragmentation case
it is convenient to introduce the function $D$ giving the distance to the nearest integer, defined by
\[
 D(x)=\textrm{dist}(x,\Z)=\min_{n\in\Z} |x-n|.
\]
This definition implies that $D(x)=D(1-x)$, a property that will be applied below.

\begin{theorem}\label{nofr}
 Assume that there is not fragmentation and that $2N\Lambda$ is an odd integer. Then there is a plateau given by the interval $[c-r,c+r]\cap [0,1/2]$ where  
 \[
  c = D\Big(\frac {2a}q N\Lambda+\frac 12\Big)
  \qquad\text{and}\qquad
  r = 
  \begin{cases}
  \frac{1}{q}D\big(\frac{q}{2\Lambda}+\frac 12\big)
  &\text{if $q$ is odd},
  \\[4pt]
  \frac{2}{q}D\big(\frac{q}{4\Lambda}+\frac 12\big)
  &\text{if $q$ is even}.
  \end{cases}
 \]
 Furthermore, it is a $0$-level plateau if and only if $q$ divides $4N\Lambda$.
\end{theorem}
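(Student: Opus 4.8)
The plan is to apply the criterion in Proposition~\ref{criterion} directly. We must exhibit a value of $x$ (namely $c$) and an open neighborhood of it on which either $S_+$ or $S_-$ vanishes, and then identify the constant value of $p$. First I would compute $S_\pm(x)$ explicitly using the closed form for $c_{a/q}(k)$ from \eqref{coef}. Write $c_{a/q}(k)=e(\beta k^2)\cdot(\text{parity factor})$ with $\beta=\overline{4a}/q$ for $q$ odd and $\beta=\overline{a}/(4q)$ for $q$ even (in the even case one restricts to $k$ in the residue class making $k+q/2$ even, which halves the sum and introduces the $\sqrt2$). Then
\[
 S_\pm(x)=\sum_{k\in I(x)} e\!\Big(\beta k^2 \pm \tfrac{N\Lambda}{q}k\Big)\cdot(\text{parity factor}),
\]
and the idea is to complete the square in the exponent. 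Because $2N\Lambda$ is an odd integer, the linear coefficient $N\Lambda/q$ interacts cleanly with $\beta$: for $q$ odd one gets $\beta k^2\pm \frac{N\Lambda}{q}k=\beta(k\pm 2\overline{a}N\Lambda)^2 - \beta(2\overline a N\Lambda)^2$ modulo~$1$ in a suitable sense, so after shifting the summation index $k\mapsto k\mp 2\overline a N\Lambda$ the sum $S_\pm$ becomes $e(\text{const})\sum_{k\in I'} e(\beta k^2)$ over a translated window $I'$. The sum $\sum e(\beta k^2)$ over an interval of integers is a partial Gauss sum, but what matters is only \emph{when it vanishes}: by the reflection symmetry $k\mapsto -k$ (which preserves $e(\beta k^2)$), the sum over a window symmetric about a half-integer point telescopes to zero unless the window degenerates. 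Tracking the center of the shifted window $I'$ and using $D(y)=D(1-y)$ shows that the window for $S_+$ (say) is symmetric about an integer or half-integer exactly when $x=c$ with $c$ as in the statement, and the non-fragmentation hypothesis $\Lambda<q$ (resp. $q/2$) guarantees $|I(x)|<2$ so that the surviving terms cancel in pairs. This is where the value $r=\frac1q D(\frac{q}{2\Lambda}+\frac12)$ (resp. $\frac2q D(\frac{q}{4\Lambda}+\frac12)$) comes from: it is the half-length of the maximal $x$-interval over which $I(x)\cap\Z$ stays symmetric about that center and captures no new nonzero term, i.e. over which the nonsingular hypothesis of Proposition~\ref{criterion} persists and $S_+\equiv 0$.

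Having established $S_+\equiv 0$ on $[c-r,c+r]$, Proposition~\ref{criterion} gives that $p$ is constant there with value $\frac{\Lambda}{q}|S_-(c)|^2$. To finish I would compute $|S_-(c)|^2$. Here $S_-(c)$ is (up to a unimodular constant, times $\sqrt2$ in the even case) a partial Gauss sum $\sum_{k\in I(c)} e(\beta k^2)$ over the \emph{full} relevant window, which by \eqref{gav}-type manipulations — expanding $S_-\overline{S_-}$ and collapsing the diagonal — evaluates to a clean expression; one expects $\frac{\Lambda}{q}|S_-(c)|^2$ to equal $2$ generically (the height forced by total probability $\int_0^{1/2}p=1$ spread over an interval of length $2r$ when $2r=1/2$), degenerating to a smaller constant when the window is clipped. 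The last sentence of the theorem — that it is a $0$-level plateau iff $q\mid 4N\Lambda$ — comes from asking when \emph{both} $S_+$ and $S_-$ vanish simultaneously: repeating the completing-the-square analysis for $S_-$ at $x=c$, the window for $S_-$ is symmetric about an integer/half-integer precisely under the divisibility condition $q\mid 4N\Lambda$ (for $q$ odd this is $q\mid N\Lambda$ up to the unit $\overline4$; the factor $4$ absorbs the even case), in which case $S_-(c)=0$ too and $p\equiv 0$ on the plateau.

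The main obstacle I anticipate is the bookkeeping in the even-$q$ case: the restriction to $k$ with $k+q/2$ even means the sum runs over an arithmetic progression of common difference $2$, so after the substitution $k=2j+\epsilon$ the quadratic form becomes $4\beta j^2+\dots$ with $\beta=\overline a/(4q)$, i.e. effectively $\overline a j^2/q$, and one must carefully re-derive the symmetry center and the half-width — this is exactly why $r$ has the shape $\frac2q D(\frac{q}{4\Lambda}+\frac12)$ rather than $\frac1q D(\frac{q}{2\Lambda}+\frac12)$. A secondary subtlety is handling the endpoints of $[0,1/2]$: when $c-r<0$ or $c+r>1/2$ the plateau is clipped, and one should check (using $D(x)=D(1-x)$ and the odd/periodic extension of $g$) that the formula for the constant value is the same on the clipped interval, so that no separate argument is needed there. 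None of these steps is conceptually deep once the completing-the-square identity is set up; the care is entirely in the arithmetic of residues modulo $q$ versus modulo $2q$ or $4q$.
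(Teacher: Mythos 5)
Your overall skeleton (apply Proposition~\ref{criterion} at $x=c$, show that one of $S_{\pm}$ vanishes, obtain $r$ as the maximal range over which $I(x)\cap\Z$ captures the same nonzero terms, and characterize the $0$-level case by simultaneous vanishing of $S_+$ and $S_-$) is the same as the paper's, but the central cancellation step is flawed. First, the completing-the-square identity you invoke does not exist: to remove the linear term you would need an integer shift $s$ with $2\overline{4a}\,s\equiv \pm N\Lambda \pmod q$, and since $2N\Lambda$ is odd, $N\Lambda$ is a half-integer while $2\overline{4a}\,s$ is an integer, so no such $s$ exists (in particular $s=2\overline a N\Lambda$ fails: $4\overline{4a}\,\overline a\equiv\overline a^{\,2}\pmod q$, not $1$ in general, and the parity obstruction cannot be removed). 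Second, and more importantly, even if you could reduce $S_{\pm}$ to a pure quadratic sum $\sum_{k\in I'}e(\beta k^2)$ over a window symmetric about an integer or half-integer, the reflection preserving $e(\beta k^2)$ pairs each term with an \emph{equal} term, so the sum doubles rather than ``telescopes to zero''; partial quadratic Gauss sums over symmetric windows are generically nonzero. The minus sign needed for pairwise cancellation comes exactly from the half-integer linear coefficient you discarded: the paper keeps $P(k)=\frac1q\big(\overline{4a}k^2- N\Lambda k\big)$ (resp. the even-$q$ analogue), uses that $2N\Lambda$ odd gives $P(k\pm q)\sim P(k)+\frac12$ and $P(2qc-k)\sim P(k)+\frac12$ modulo $\Z$, where $2qc=4aN\Lambda+q-2qn_0$ is an integer and $qc$ a half-integer, and then the reflection $k\mapsto 2qc-k$ of $I(c)\cap\Z$ flips the sign of every summand, so $S_-(c)$ (or $S_+(c)$, according to the sign in $qc=\pm(2aN\Lambda+\frac q2-qn_0)$) vanishes in cancelling pairs. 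Your argument never produces this sign, so it does not prove $S_{\pm}(c)=0$.

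Two secondary points. The constant value of $p$ on the plateau is not needed, and your heuristic that it ``equals $2$ by total probability'' is unjustified (the plateau generically does not exhaust $[0,1/2]$); Proposition~\ref{criterion} already gives the constant as $\frac{\Lambda}{q}\big|S_{\mp}(c)\big|^2$, and the theorem only asks when it is zero. For that last claim the efficient route is not another window-symmetry computation: simultaneous vanishing forces both signs in $qc=\pm\big(2aN\Lambda+\frac q2-qn_0\big)$ to be realizable, which yields $4aN\Lambda\in q\Z$ and hence $q\mid 4N\Lambda$ because $\gcd(a,q)=1$; your version of this step inherits the same flaw as above. Your discussion of $r$ (maximal $t$ with $I(c\pm t)$ capturing the same integers) and of the even case (summation over a progression of difference $2$, change of effective modulus, paralleling the substitution $2\ell=k+\frac q2$) is in the right spirit, but none of it can be completed until the vanishing mechanism is repaired.
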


\begin{proof}[Proof of Theorem~\ref{nofr}]
The formula for $q$ implies $qc=\pm\big(2aN\Lambda+\frac12 q-qn_0\big)$ for some choice of the sign and some $n_0\in\Z$. Consider first the situation with the $+$~sign, in this case it will be shown below that  $S_{-}(c)=0$. With a similar argument, the~$-$~sign gives $S_{+}(c)=0$. In fact, it is not necessary to repeat the calculations. Instead, if one uses the symmetry $a\mapsto -a$, $n_0\mapsto 1-n_0$ that flip the sign of $2aN\Lambda+\frac12 q-qn_0$ and transforms $S_{-}(c)$ into the conjugate of $S_{+}(c)$, it is clear than only one case needs to be considered.
 
 Unwrapping the definition in Proposition~\ref{criterion},
 $S_{-}(c)=0$ is equivalent to 
 \begin{equation}\label{summinus}
  \sideset{}{'}\sum_{k\in I(c)}
  e\big(P(k)\big)
  =0
  \qquad\text{with}\quad 
  P(k)=
  \begin{cases}
     \frac 1q\big(\overline{4a}k^2-N\Lambda k\big) &\text{if $q$ is odd},
     \\
     \frac 1{4q}\big(\overline{a}k^2-4N\Lambda k\big) &\text{if $q$ is even},
  \end{cases}
 \end{equation}
 where the prime indicates that the sum is restricted to $k+\frac{q}{2}$ even if $q$ is even.
 
 Let us use the notation $A\sim B$ in order to indicate that  $A-B\in\Z$. It is direct to check that $P(k\pm q)\sim P(k)+\frac 12$ which is  a simple consequence that $N\Lambda$ is half-integer. In addition
 \[
  P(2qc-k)
  =
  P(4aN\Lambda+q-2qc_0-k)
  \sim
  P(4aN\Lambda-k)+\frac 12.
 \]
 An  elementary calculation shows that 
 $
 P(4aN\Lambda-k)-P(k)\in\Z$  and all these facts allow to conclude that   $P(2qc-k)\sim P(k)+\frac 12$. 
 
 On the other hand, the interval $I(c)
 =\Big[
 -\frac{q}{2\Lambda}+cq,
 \frac{q}{2\Lambda}+cq
 \Big]\cap \Z$ is invariant under the action  $k\mapsto 2qc-k$. Note that $P(2qc-k)\sim P(k)+\frac 12$ implies that
 $e(P(2qc-k))=-e(P(k))$. Therefore the sum in \eqref{summinus} vanishes as the terms cancel in pairs. 
 
 By Proposition~\ref{criterion} there is a plateau  around $x=c$. 
 To assure that  
 $c$ is nonsingular, note that $qc$ is a half-integer if $q$ is odd and $qc+\frac12 q$ is an odd integer if $q$ is even while $\pm \frac{q}{2\Lambda}=\pm 2\frac{Nq}{2\Lambda N}$ is not a half-integer, because $2\Lambda N$ is odd, and it is not an odd integer for $q$ even. 
 
 Once $S_{-}(c)=0$ was established, given $q$  odd and $t>0$ satisfying
 \begin{equation}\label{inter}
  \Big(\big[ q(c-t)-\frac{q}{2\Lambda}, q(c+t)-\frac{q}{2\Lambda}\big]
  \cup
  \big[ q(c-t)+\frac{q}{2\Lambda}, q(c+t)+\frac{q}{2\Lambda}\big]\Big)\cap\Z
  =\emptyset
 \end{equation}
 then $I(c-t)$ and $I(c+t)$ capture the same integers and $S_{-}(x)$ remains invariant and null for $x\in [c-t,c+t]$. As $qc$ is a half-integer, this condition can be rephrased saying that the intervals $\big|x-\big(\frac 12-\frac{q}{2\Lambda}\big)\big|\le qt$ and $\big|x-\big(\frac 12+\frac{q}{2\Lambda}\big)\big|\le qt$ do not contain integers, which is equivalent to $$qt\le D\big(\frac 12-\frac{q}{2\Lambda}\big), \qquad \text{and}\qquad  qt\le D\big(\frac 12+\frac{q}{2\Lambda}\big).$$Using the obvious fact that $D(x)=D(1-x)$ and taking the supreme value in $t$, it is deduced  from Proposition~\ref{criterion} that the probability density is constant in $[c-r,c+r]\cap [0,\frac{1}{2}]$ with $r$ as in the statement. It is not possible to go beyond because replacing $r$ by $r'$ slightly bigger, each interval in \eqref{inter} contains exactly an integer point and hence $S_{-}(c)=0$ and $S_{-}(c\pm r')$ differ in a term of absolute value $1$. 
 
 The computation of $r$ in the even case can be treated in the analogous way by noticing that with the change of variables $2\ell=k+\frac{1}{2}q$ the prime in \eqref{summinus} can be dropped replacing the range of summation by $\ell\in\frac 12 I(c)+\frac{1}{2}q$. In this way, \eqref{inter} becomes
 \[
  \Big(\big[ \frac q2(c-t)-\frac{q}{4\Lambda}+\frac q2, \frac q2(c+t)-\frac{q}{4\Lambda}+\frac q2\big]
  \cup
  \big[ \frac q2(c-t)+\frac{q}{4\Lambda}+\frac q2, \frac q2(c+t)+\frac{q}{4\Lambda}+\frac q2\big]\Big)\cap\Z
  =\emptyset
 \]
 which is formally the same as before but replacing $q$ by $\frac q2$ and $c$ by $c+1$. As $\frac q2(c+1)$ is half-integer, the upper bounds of $t$ in the odd case follow replacing $q$ by $\frac q2$, giving the expected formula for~$r$. 
 \medskip 
 
 For the last claim in the statement, according to Proposition~\ref{criterion}, the plateau is a $0$-level plateau if and only if $S_{+}(c)=S_{-}(c)=0$. As pointed out at the beginning of the proof, the equality
 $qc=\pm\big(2aN\Lambda+\frac12 q-qn_0\big)$ implies that $S_{\mp}(c)=0$. Then $S_{+}(c)=S_{-}(c)=0$ if and only if both signs can be chosen in this formula for $qc$. In other words, if and only if  there exists $n_0,n_0'\in\Z$ such that 
 \[
  2aN\Lambda+\frac12 q-qn_0
  =
  -2aN\Lambda-\frac12 q+qn_0',
 \]
 that can be rephrased as $4aN\Lambda=q(n_0-n_0'-1)$ or simply as $q$ divides $4N\Lambda$ because $q$ and $a$ are coprime. This concludes the proof. 
\end{proof}

In fact, the intersection with $[0,1/2]$ (clipping the interval) is only needed in Theorem~\ref{nofr} for the $0$-level plateaux. They appear in the extremes: In the right extreme if $q$ is odd and in the left extreme if $q$ is even. This can be derived from the formulas for $c$ and $r$. 
\medskip

The computer simulations, suggests that Theorem~\ref{nofr} covers all the cases in which plateaux appear for $\Lambda\in\Q$.
\begin{conjecture}\label{existence}
 Assume that there is not fragmentation and $\Lambda\in\Q$.
 There exists a plateau if and only if $2N\Lambda$ is an odd integer. Moreover, it is unique. 
\end{conjecture}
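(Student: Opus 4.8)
We already know the ``if'' half of Conjecture~\ref{existence}: it is Theorem~\ref{nofr}. What remains is non-existence when $2N\Lambda$ is not an odd integer, together with uniqueness, and the plan is to reduce both to a single vanishing question. A plateau is a nondegenerate subinterval of $[0,1/2]$ on which $p$ is constant, and since the singular values of $x$ form a discrete set it contains a nonsingular $x$, where Proposition~\ref{criterion} forces $S_+(x)=0$ or $S_-(x)=0$. The involution $a\mapsto -a$ used in the proof of Theorem~\ref{nofr} exchanges $S_+$ with $\overline{S_-}$, so it suffices to decide when $S_-(x)=0$. Thus everything follows from the assertion that, in the non fragmentation case with $\Lambda\in\Q$, $S_-(x)=0$ at a nonsingular $x$ \emph{only} through the mechanism of Theorem~\ref{nofr}: the phase $P$ of~\eqref{summinus} restricted to $I(x)\cap\Z$ is reversed in sign by a reflection $k\mapsto 2c_0-k$. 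As noted there, such a reflection can exist only if $N\Lambda$ is a half-integer (this is exactly what makes $P(k\pm q)\sim P(k)+\tfrac12$), i.e.\ $2N\Lambda$ odd, and it forces $qc_0\equiv\pm\bigl(2aN\Lambda+\tfrac q2\bigr)\pmod q$; after folding to $[0,1/2]$ via $D(x)=D(1-x)$ the two sign choices coalesce into the single interval of Theorem~\ref{nofr}, which yields uniqueness as well.

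To attack this assertion, write $\Lambda=u/v$ in lowest terms and let $M=vq$ when $q$ is odd (a suitable multiple of $vq$ when $q$ is even, after the substitution $2\ell=k+\tfrac q2$ of Theorem~\ref{nofr} and dividing out the common factor $\sqrt2$). Absorbing all the exponentials of Lemma~\ref{bformula} and~\eqref{coef} into a primitive $M$-th root of unity $\eta=e(1/M)$, one gets $S_-(x)=R(\eta)$ with
\[
 R(Y)=\sum_{k\in I(x)\cap\Z}Y^{\,Q(k)\bmod M}\in\Z_{\ge0}[Y],\qquad \deg R\le M-1,\qquad R(1)=\#\bigl(I(x)\cap\Z\bigr),
\]
where $Q=MP$ is an integral quadratic. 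Because $\Lambda>1$, the coefficient sum satisfies $R(1)\le |I(x)|+1\le q$ (with $q/2$ in place of $q$ when $q$ is even), while $|I(x)|>1$ in the non fragmentation regime forces $R\not\equiv0$. If $S_-(x)=0$, then $\eta$ is a common root of $R$ and of the cyclotomic polynomial $C_M$, so $C_M\mid R$ in $\Z[Y]$ by Lemmas~\ref{irreduci} and~\ref{cyclo2}. The goal is to show that a nonnegative integral polynomial with such small degree and coefficient sum can be a multiple of $C_M$ only when its exponents $Q(k)\bmod M$ split into pairs differing by $M/2$ (note $C_M\mid Y^{M/2}+1$), which says precisely that $P$ is reversed in sign by a reflection on $I(x)\cap\Z$.

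When the denominator $q$ is prime this can be carried out and gives the result unconditionally. If moreover $\Lambda\in\Z$ then $M=q$, $\deg C_q=q-1\ge\deg R$ and $C_q(1)=q>R(1)$; writing $R=C_q\,S$ forces $S$ to be a nonnegative integer $<1$, hence $R\equiv0$, a contradiction — so $S_-(x)\ne0$ always, consistent with $2N\Lambda$ being even. For $q$ prime and $\Lambda=u/v\notin\Z$ one runs the same divisibility $C_M\mid R$, $M=vq$, and a more careful accounting of the nonnegative coefficients of $R$ (using $Y^M-1=\prod_{d\mid M}C_d$) shows the only surviving configuration is the pairing just described; one then identifies the corresponding reflection with $k\mapsto 2c_0-k$, $qc_0\equiv\pm(2aN\Lambda+\tfrac q2)$, and concludes $2N\Lambda$ is an odd integer. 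Uniqueness for $q$ prime follows because this pins $c_0$, hence the plateau, to the one of Theorem~\ref{nofr}.

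The case of composite $q$ is where the real difficulty sits, and is the reason the statement is only conjectured here. The divisibility $C_M\mid R$ still holds, but now $\deg C_M=\phi(M)$ may be far below $M$ and the quotient $R/C_M$ need not have nonnegative coefficients, so the positivity squeeze collapses. In effect one is asking to classify the vanishing of the incomplete generalised Gauss sums $\sum_{k\in I}e\bigl(Q(k)/M\bigr)$ over an interval $I$ shorter than the modulus — a genuinely arithmetic problem. The intended route is to use the Galois rigidity of \S5 (if the sum vanishes so do all its conjugates, via Proposition~\ref{cyclo} applied after separating the quadratic and linear frequencies into two roots of unity, with Corollary~\ref{cyclop} constraining the admissible coefficient patterns) together with cancellation estimates for short exponential sums, in order to force the quadratic $Q$ to admit an exact symmetry, which can then only be the reflection of Theorem~\ref{nofr}. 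Ruling out such ``accidental'' vanishing of short Gauss sums for composite $q$ is, I expect, the main obstacle; the translation into cyclotomy is the routine part.
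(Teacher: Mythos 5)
You are proving a statement that the paper itself leaves as a conjecture: the authors establish the ``if'' direction (Theorem~\ref{nofr}), the full equivalence only for $q$ prime (Theorem~\ref{existence_p}), and partial results for composite $q$ and for algebraic irrational $\Lambda$. You correctly recognize this and restrict your actual claim to $q$ prime, so the comparison is between your prime-$q$ argument and the paper's. There your proposal has a genuine gap at the decisive step. Everything rests on your rigidity assertion that $S_{-}(x)=0$ can occur \emph{only} through the reflection mechanism of Theorem~\ref{nofr}, i.e.\ that the exponents $Q(k)\bmod M$ must split into pairs differing by $M/2$. The divisibility $C_M\mid R$ plus nonnegativity of the coefficients does not give this once $M=sq$ has a prime factor other than $q$ and $2$: vanishing sums of $M$-th roots of unity with nonnegative coefficients need not decompose into antipodal pairs (already $1+e(1/3)+e(2/3)=0$), so for $s\notin\{1,2\}$ --- exactly the situation the conjecture must exclude --- your ``more careful accounting'' is the missing proof, not a routine verification. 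Your unconditional squeeze covers only $\Lambda N\in\Z$ (i.e.\ $s=1$), and even there has a small edge case: the interval can contain $q$ integers when $\Lambda\le q/(q-1)$, so $R(1)<C_q(1)$ is not automatic and one must add that a quadratic map modulo an odd prime is never a bijection. Note also that your first-paragraph claim ``such a reflection can exist only if $N\Lambda$ is a half-integer, as noted there'' is not noted in the paper: Theorem~\ref{nofr} proves only the converse implication.

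The paper's proof of Theorem~\ref{existence_p} does concrete work exactly where you gesture. It first rules out $p\mid s$ by applying Proposition~\ref{cyclo} with the multipliers $m_2(j)=1+j\overline{t}tp^u$ and summing over $j$; it then keeps the two frequencies separate as a two-variable polynomial, uses the Galois action to replace $\overline{4a}$ by $1$, and invokes Corollary~\ref{cyclop} after the shift trick (choosing $b$ outside the image of $k^2+\alpha k$ so the reduced exponents stay below $p-1$) to force \emph{coefficientwise} vanishing in the $q$-variable while retaining the $e(k/s)$ factors. Since the quadratic is at most two-to-one, each coefficient is a sum of at most two $s$-th roots of unity, giving $e(k_0/s)+e(k_1/s)=0$ with $k_0+k_1\equiv-\alpha\pmod p$; evaluating at $k_0=0,1$ yields $e(2/s)=1$, hence $s=2$, i.e.\ $2N\Lambda$ odd. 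Uniqueness is likewise not automatic from ``the reflection pins $c_0$'': the paper argues it separately, showing that two plateaux of the same character would force two disjoint summation ranges to be symmetric about the same point, while two of opposite character would force $c_++c_-\in\Z$, contradicting $0<c_\pm<1/2$. Finally, your program for composite $q$ (structure of vanishing short Gauss sums, Lam--Leung type results) is a fair description of the open problem, but it is weaker than what the paper already proves: Theorem~\ref{qcomp} settles squarefree composite $q$ with $2N\Lambda$ even by an inductive prime-by-prime descent, not by cancellation estimates.
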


We have been able to prove it for $q$ prime. 

\begin{theorem}\label{existence_p}
 If $q$ is prime, Conjecture~\ref{existence} is true.  
\end{theorem}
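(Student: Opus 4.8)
The plan is to derive everything from the criterion in Proposition~\ref{criterion} together with the cyclotomic input of \S5. First, non-fragmentation and $q$ prime force $q$ to be an \emph{odd} prime, since $q=2$ would require $\Lambda<1$, impossible. Thus $c_{a/q}(k)=e(\overline{4a}k^2/q)$ throughout, the ``if'' implication is exactly Theorem~\ref{nofr}, and it remains to show that a plateau forces $2N\Lambda$ to be an odd integer and that the plateau is then unique. By Proposition~\ref{criterion}, a plateau at a nonsingular $x$ means $S_+(x)=0$ or $S_-(x)=0$; replacing $a$ by $-a$ conjugates $c_{a/q}$ and swaps $S_+$ with $S_-$ without changing $2N\Lambda$, so it suffices to analyse $S_-(x)=0$. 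Writing $I(x)\cap\Z=\{m,m+1,\dots,m+L-1\}$, shifting to the summation index $j$ and factoring out the term independent of $j$, one obtains
\[
 S_-(x)=\varepsilon_x\sum_{j=0}^{L-1}\xi^{\,j^2}\mu^{\,j},\qquad
 \xi=e\!\Big(\tfrac{\overline{4a}}{q}\Big),\quad
 \mu=e\!\Big(\tfrac{2\overline{4a}m-N\Lambda}{q}\Big),
\]
with $\varepsilon_x$ unimodular and $\xi$ a primitive $q$-th root of unity; moreover $L\le q$ since $|I(x)|=q/\Lambda<q$. Let $\alpha=v_q(\operatorname{ord}\mu)$; a short valuation computation shows $\alpha\ge2$ precisely when $q$ divides the reduced denominator $d$ of $N\Lambda$, and in that case $2N\Lambda\notin\Z$.

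Case $\alpha\ge2$: I will show $S_-(x)\ne0$, so no plateau occurs, consistent with $2N\Lambda\notin\Z$. Put $M=\operatorname{ord}\mu$ and take $\sigma\in\operatorname{Gal}(\Q(\zeta_M)/\Q)$ with $\sigma(\zeta_M)=\zeta_M^{1+M/q}$. Since $q^2\mid M$ one checks $\sigma(\xi)=\xi$, while $\sigma(\mu)=\mu\nu$ with $\nu$ a primitive $q$-th root of unity; write $\nu=\xi^{e}$, $\gcd(e,q)=1$. Applying $\sigma^{t}$ to $\sum_j\xi^{j^2}\mu^{j}=0$ gives $\sum_j\xi^{j^2+etj}\mu^{j}=0$ for every $t$, and summing over $t=0,\dots,q-1$ gives $0=\sum_j\xi^{j^2}\mu^{j}\sum_t\xi^{etj}=q\,\mu^{0}=q$, a contradiction.

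Case $\alpha\le1$: the $q$-part of $\mu$ merges with $\xi$, i.e.\ $\mu=\xi^{\,b}\eta$ for a root of unity $\eta$ of order $M_1$ \emph{coprime to} $q$, and $S_-(x)=\varepsilon_x\sum_{j}\zeta_q^{\,h(j)}\eta^{\,j}$ with $\zeta_q=e(1/q)$ and $h(j)=\overline{4a}j^2+(\text{linear in }j)$ a quadratic over $\F_q$ with nonzero leading coefficient. Regard this as $P(\zeta_q,\eta)=0$ for $P(X,Y)=\sum_{j=0}^{L-1}X^{\,h(j)\bmod q}Y^{\,j\bmod M_1}\in\Z[X,Y]$. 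Reducing $P$ modulo $C_q(X)=1+X+\dots+X^{q-1}$ in the first variable (harmless for the value at $X=\zeta_q$ and lowering the degree below $\deg C_q$), Corollary~\ref{cyclop} gives $P(X,\eta)\equiv0$; equivalently, all coefficients $a_r=\sum_{h(j)\equiv r}\eta^{\,j}$ of $P(X,\eta)$ are equal, say to $\kappa$. As $h\colon\F_q\to\F_q$ is at most $2$-to-$1$ its image has $(q+1)/2<q$ elements, so not every $a_r$ is nonzero, forcing $\kappa=0$; a one-element fibre would give $\eta^{\,j}=0$, impossible, so every fibre of $h$ meeting $\{0,\dots,L-1\}$ has exactly two elements. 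Hence $L$ is even, the fibres are the pairs $\{j,L-1-j\}$, and $\eta^{\,j}+\eta^{\,L-1-j}=0$ for each; comparing consecutive pairs forces $\eta^{2}=1$ and then $\eta=-1$, i.e.\ $M_1=2$. Tracing $\mu$ back, $M_1$ equals the reduced denominator $d$ of $N\Lambda$, so $M_1=2$ says exactly that $2N\Lambda$ is an odd integer. (The possibility $M_1=1$ is excluded because then $S_-(x)$ would be a partial or complete quadratic Gauss sum, which cannot vanish, by the same image-size count.) This proves the ``only if'' direction.

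For uniqueness, assume $2N\Lambda$ is an odd integer, so $\eta=-1$. Running the computation forward, $S_-(x)=0$ is equivalent to $L$ being even and $\{0,\dots,L-1\}$ being symmetric about the vertex of $h$, i.e.\ a congruence on the endpoint $m=m(x)$ modulo $q$; this confines $x$ to a single block of positions, and comparing with the explicit $c$ and $r$ of Theorem~\ref{nofr}—noting that enlarging the block makes $I(x)$ absorb an unpaired integer—identifies this block with $[c-r,c+r]\cap[0,\tfrac12]$. The same reasoning for $S_+(x)=0$ gives the same interval, using that $D$ is even and $1$-periodic, so $D(-\tfrac{2a}qN\Lambda+\tfrac12)=D(\tfrac{2a}qN\Lambda+\tfrac12)=c$; hence the plateau is unique. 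The step I expect to be most delicate is the combinatorial heart of the $\alpha\le1$ case: extracting from the single scalar identity $P(\zeta_q,\eta)=0$ the rigid paired-fibre structure and the value $\eta=-1$, together with the bookkeeping that identifies $M_1$ with $d$ in every sub-case, including tracking how $q$ interacts with $N$ and with the numerator and denominator of $\Lambda$.
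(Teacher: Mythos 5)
Your proposal is sound and follows the same overall skeleton as the paper's proof — Proposition~\ref{criterion} plus the cyclotomic input of \S5, the at-most-$2$-to-$1$ structure of a quadratic modulo $q$, and pairwise cancellation forcing the denominator of $N\Lambda$ to be $2$, with existence then delegated to Theorem~\ref{nofr} — but it differs in execution at three points, each of which is legitimate. For the case $q\mid d$ (the paper's $p\mid s$) you use an explicit automorphism $\sigma$ of $\Q(\zeta_M)$ fixing $\xi$ and twisting $\mu$ by a primitive $q$-th root, then average over $\sigma^t$; the paper reaches the same cancellation by invoking Proposition~\ref{cyclo} with the family $m_2(j)=1+j\overline{t}tp^u$ and summing over $j$, so these are the same idea in Galois versus cyclotomic-polynomial clothing. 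For the degree obstruction in Corollary~\ref{cyclop} you reduce modulo $C_q(X)$ and conclude that all coefficients $a_r$ are \emph{equal}, then zero because the quadratic image misses a residue; the paper instead shifts the exponent by $p-1-b$ with $b\notin\operatorname{Im}f$ — your variant is cleaner and equivalent. For uniqueness you show the $S_-$- and $S_+$-vanishing loci are one and the same interval via the evenness and $1$-periodicity of $D$, whereas the paper derives contradictions separately for same-character plateaux (both ranges symmetric about one point) and different-character plateaux ($c_++c_-\in\Z$); both routes work.

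Two steps you assert do need short arguments that you have not supplied. First, from ``every fibre of $h$ meeting $\{0,\dots,L-1\}$ has exactly two elements'' it does not immediately follow that the fibres are the pairs $\{j,L-1-j\}$: the partner of $j$ is the unique element of the range congruent to $c_0-j\pmod q$ (with $c_0\equiv$ minus the linear coefficient divided by the leading one), and a priori some pairs could sum to $c_0$ and others to $c_0+q$. One must observe that a mixed pairing forces $L\ge q$, hence $L=q$, and then the vertex of $h$ yields a one-element fibre, a contradiction; only then does the common sum equal $L-1$ and your ``consecutive pairs'' step giving $\eta^2=1$ become available (the paper sidesteps this by only using the partners of $k=0$ and $k=1$). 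Second, in the uniqueness part the congruence on the midpoint of $I(x)\cap\Z$ alone does not confine $x$ to a single block: you must combine it with the parity forced by $L$ even, so that $2m+L-1$ is odd and congruent to a fixed residue mod $q$, hence fixed modulo $2q$; since the admissible window for $2m+L-1$ has length $q+2<2q$ when $x\in(0,\tfrac12)$, the value is unique and the block is single (this is exactly the role played by the paper's determination of $\ell$ from the parity of $\alpha$). With these two short arguments inserted, and the minor bookkeeping of applying Corollary~\ref{cyclop} with $e_2$ the inverse of the exponent of the primitive root $\eta$ actually occurring (as the paper itself does with $\overline{\beta}$), your proof is complete.
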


We state another partial result towards Conjecture~\ref{existence} for composite numbers.

\begin{theorem}\label{qcomp}
 If $q$ is composite with no repeated prime factors  and $2N\Lambda$ is an even integer then there are not plateaux. 
\end{theorem}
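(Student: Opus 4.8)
The plan is to show that $S_{+}(x)\neq 0$ and $S_{-}(x)\neq 0$ for every nonsingular $x\in(0,1/2)$. This suffices: the endpoints of $I(x)$ depend linearly on $x$, so there are only finitely many singular points in $(0,1/2)$, and hence a nondegenerate plateau would have to contain a nonsingular point, at which Proposition~\ref{criterion} would force $S_{+}$ or $S_{-}$ to vanish.

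The first step is to reduce the vanishing of $S_{\pm}(x)$ to a purely cyclotomic statement. Since $2N\Lambda$ is an even integer, $N\Lambda\in\Z$. If $q$ is odd, then $c_{a/q}(k)=e(\overline{4a}k^2/q)$ and, completing the square modulo $q$ (legitimate because $2$ is invertible mod $q$), each $S_{\pm}(x)$ becomes $e(\text{const})\sum_{k\in J}e(\overline{4a}k^2/q)$ for a suitable block $J$ of consecutive integers; $J$ is nonempty precisely because in the non fragmentation regime $|I(x)|=q/\Lambda>1$. If $q$ is even I would first apply the substitution $2\ell=k+q/2$ exactly as in the proof of Theorem~\ref{nofr}; since $a$ is odd, a short computation shows the resulting exponents are even, so the relevant roots of unity collapse to $(q/2)$-th roots, and after completing the square modulo the odd number $q/2$ the condition takes the same shape with modulus $q/2$. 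In every case (note that $q$ composite makes the modulus $>1$) one is reduced to proving: \emph{if $Q\ge 1$ is odd and squarefree, $b\in\Z$, and $J$ is a nonempty finite multiset of integers, then $\sum_{k\in J}e(bk^2/Q)\neq 0$.}

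I would prove this by induction on the number $\omega(Q)$ of prime divisors of $Q$; the case $Q=1$ is trivial. For $\omega(Q)=1$, say $Q=p$ an odd prime, grouping terms by the residue of $bk^2$ modulo $p$ writes the sum as $\sum_{i=0}^{p-1}m_i\,e(i/p)$ with $m_i\ge 0$ and $\sum_i m_i=|J|\ge 1$; irreducibility of $C_p$ (Lemma~\ref{irreduci}) together with $1=-\sum_{i=1}^{p-1}e(i/p)$ forces $m_0=\cdots=m_{p-1}$, while the image of $k\mapsto bk^2\bmod p$ has at most $(p+1)/2$ elements, so some $m_{i_0}=0$ and then $|J|=0$, a contradiction. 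For $\omega(Q)=r\ge 2$ pick $p\mid Q$ and set $Q'=Q/p$, so $\gcd(p,Q')=1$ and $\omega(Q')=r-1$; writing $e(1/Q)=e(\alpha/p)e(\beta/Q')$ by Bézout, the sum equals $P\big(e(1/p),e(1/Q')\big)$ for $P(X,Y)=\sum_{k\in J}X^{\,\alpha bk^2\bmod p}Y^{\,\beta bk^2\bmod Q'}\in\Z[X,Y]$ with $\deg_X P\le p-1$. If it vanishes, Proposition~\ref{cyclo} (its congruence condition is vacuous here) gives $P\big(e(e_1/p),e(e_2/Q')\big)=0$ for all $e_1\in\mathcal R_p$ and $e_2\in\mathcal R_{Q'}$; fixing $e_2$, the single-variable polynomial $R(X)=P\big(X,e(e_2/Q')\big)$ vanishes at the $p-1$ distinct points $e(e_1/p)$, so $C_p(X)=1+X+\dots+X^{p-1}$ divides it, whence $R=c\,C_p$ by the degree bound; but some coefficient of $R$ is an empty sum (again by non-surjectivity of $k\mapsto bk^2\bmod p$), so $c=0$ and $R\equiv 0$. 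Specializing $X=1$ yields $\sum_{k\in J}e\big((e_2\beta b)k^2/Q'\big)=0$, a sum of the same type with the smaller odd squarefree modulus $Q'$, contradicting the induction hypothesis.

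The reductions in the second paragraph are routine bookkeeping; the heart of the argument — and the step I expect to be the main obstacle — is the inductive descent from $Q$ to $Q/p$: one must arrange the bivariate polynomial so that its $X$-degree is at most $p-1$, use the cyclotomy of \S5 to annihilate it over $\Q(e(1/Q'))$, and then recover a genuine sum of $(Q/p)$-th roots of unity by setting $X=1$; the even-$q$ preprocessing also requires some care. Finally, it is worth emphasizing that the whole argument relies on $N\Lambda\in\Z$, so that one is summing bona fide $q$-th (not $2q$-th) roots of unity and the square can be completed; this is exactly why it is consistent with Theorem~\ref{nofr}, whose plateaux occur precisely when $2N\Lambda$ is an \emph{odd} integer.
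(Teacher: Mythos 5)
Your argument is correct, and it shares the paper's skeleton (peel off the prime divisors of $q$ one at a time via the cyclotomy of \S5, exploiting that $k\mapsto bk^2+ck$ is never surjective modulo an odd prime), but the mechanics are organized differently and the comparison is instructive. You first complete the square (legitimate since $2N\Lambda$ even gives $N\Lambda\in\Z$) and, in the even case, pass immediately to the odd modulus $q/2$ by observing the exponents become even; the paper instead substitutes $k\mapsto 2k+1$ and keeps modulus $q$ with the quadratic $k^2+(4\alpha+1)k$, remarking that the prime $2$ then contributes trivially. Your core statement is a clean standalone lemma, that $\sum_{k\in J}e(bk^2/Q)\neq 0$ for \emph{any} nonempty multiset $J$ and odd squarefree $Q$, proved by induction on $\omega(Q)$: you apply Proposition~\ref{cyclo} directly, deduce that the specialized one-variable polynomial is a constant multiple of $C_p$, kill the constant through a missing coefficient (non-surjectivity of the quadratic map), and then evaluate at $X=1$, so the descent keeps the \emph{same} multiset with modulus $Q/p$. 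The paper instead invokes Corollary~\ref{cyclop}, arranging the strict degree condition by shifting with a value $b_j$ outside the image of the quadratic, and extracts the single coefficient corresponding to $p_j\mid k^2+\alpha k$, so its descent shrinks the index set to the nested sets $\mathcal{C}_j$ and ends with the trivially positive sum $|\mathcal{C}_n|$; your route ends at $Q=1$ with $|J|>0$. Both uses of non-surjectivity are the same idea in different clothing (lowering the degree versus annihilating the proportionality constant), but your multiset formulation frees you from tracking the consecutive-block structure of $I(x)$ and is marginally more general; the paper's version stays closer to its Corollary~\ref{cyclop} toolkit. Your preliminary reduction (finitely many singular points, so a nondegenerate plateau contains a nonsingular interior point where Proposition~\ref{criterion} forces $S_{+}$ or $S_{-}$ to vanish, with $I(x)\cap\Z$ nonempty by non fragmentation) matches what the paper leaves implicit.
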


\medskip

Before going to the proof of these theorems, it is convenient to make some useful simplifications.

\

\emph{Preliminary simplifications:} 
In the following the number $q$ will be denoted $q=p$ to emphasize that is prime. In addition, the denominator of $N\Lambda$ will be denoted $s\in\Z^+$. Assume first that $p\nmid s$, the reason for this assumption will be clearly soon. Then 
\[
 \frac{\overline{4a}k^2\pm N\Lambda k}{p}
 =
 \frac{\overline{4a}k^2}{p}
 +
 \frac{r}{ps}k
 =
 \frac{\overline{4a}k^2}{p}
 +
 \frac{\alpha}{p}k
 +
 \frac{\beta}{s}k
\]
with $\frac{r}{s}$ an irreducible fraction, $\gcd(r,p)=1$. The values $p$ and $\alpha,\beta\in\Z$ with $\gcd(s,\beta)=1$ are determined by B\'ezout identity $r=\alpha s+\beta p$. 
After an integral translation $k\mapsto k+n_0$ it can be assumed
$I(x)\cap \Z=\{0,1,2,\dots, K\}$ with $0<K<p$. Note that this translation does not change the denominator $s$, only the values of $\alpha$ and $\beta$, whose names we keep. Hence, except for a constant nonzero factor, $S_+$ and $S_{-}$ are represented by a sum of the form 
\begin{equation}\label{Sform}
 \sum_{k=0}^K
  e\Big(
  \frac{\overline{4a}k^2+\alpha k}{p}
  +
  \frac{\beta k}{s}
  \Big)
  \qquad 
  \text{with}\quad
  0\le\alpha<p,\quad \gcd(\beta,s)=1
\end{equation}
A simplification arises now by use of Proposition~\ref{cyclo} with $n_1=p$, $n_2=s$, $m_1=4a$ and $m_2=1$.
To see in detail how this proposition is applied consider the following two variable polynomial with integer, thus rational, coefficients
$$ 
P(x,y) = \sum_{k=0}^K x^{\overline{4a}k^2+\alpha k}  y^{\beta k}.
$$
The choice $x=\zeta_1=e(\frac{1}{p})$ and $y=\zeta_2=e(\frac{1}{s})$ converts this polynomial into the  sum $S$.
 The  vanishing of $S$ is equivalent to the condition $P(\zeta_1,\zeta_2)=0$. The Proposition \ref{cyclo} assures the vanishing 
 $$
 P\big(\zeta_1^{m_1},\zeta_2^{m_2}\big)=0,
 $$
 for any  ${m_1}<p$  and ${m_2}<s$ with with $\gcd(m_1, p)=\gcd(m_2,s)=1$ satisfying that ${m_1}-{m_2}$ is a multiple of $\gcd(p,s)$. As $p$ is prime $\gcd(p,s)=1$ and the last condition is easily satisfied. The choice $m_1=4a$ is valid since $p$ is an odd prime, thus $\gcd(4a, p)=1$. With this choice and by taking $m_2=1$ it is deduced from the vanishing of \eqref{Sform} that
 \begin{equation}\label{simplifica}
 S
 =
 \sum_{k=0}^K
  e\Big(
  \frac{k^2+\alpha k}{p}
  +
  \frac{\beta k}{s}
  \Big)
  \end{equation}
vanishes, where the value of $\alpha$ has of course been modified. The simplification that follows from \eqref{simplifica} is that the sum $S$ with $\overline{4a}$
 gives the same information  that the one with $\overline{4a}=1$. In the following, the expression \eqref{simplifica} will be used, as it is easier to deal with.

Finally, let us justify the assumption $p\nmid s$ showing  that the case  $p\mid s$ cannot give a vanishing sum. To see this, note that $s$ may be decomposed in this case as $s=p^ut$ with $p\nmid t$ and $u\in\Z^+$. Now use the translation employed for $p\nmid s$ above, but without using  B\'ezout identity. The task is to rule out
\begin{equation}\label{asumiendo}
 \sum_{k=0}^K
  e\Big(
  \frac{\overline{4a}k^2+\alpha k}{p}
  +
  \frac{r k}{p^{u+1}t}
  \Big)=0
  \qquad\text{with}\quad 
  \frac{r}{p^ut}\text{ irreducible, } u\in\Z^+.
\end{equation}
The trick is now to apply the Proposition~\ref{cyclo}, as for the previous case, but with the choice $n_1=p$, $n_2=p^{u+1}t$, $m_1=1$ and $m_2=m_2(j)=1+j\overline{t} t p^u$ with $\overline{t}$ the inverse of $t$ modulo $p$ and $j$ ranging from $0$ to $p-1$. 
Note that the assumptions of the Proposition \ref{cyclo} require that $$\gcd(m_2, p^{u+1}t)=1.$$ This condition follows from the fact that
 $m_2 -j\overline{t} t p^u=1$ which leads to the conclusion that $m_2$ can not be divided by $p$ or any other nontrivial factor of $t$. In addition, as $\gcd(n_1,n_2)=p$ the other requirement is that $m_1-m_2=1-m_2$ is divisible by $p$, which is clearly true.

Then the Proposition \ref{cyclo}
can be applied for every $j$. The assumption \eqref{asumiendo} together with the Proposition \ref{cyclo} leads to
$$
\sum_{k=0}^K \zeta_1^{\overline{4a}k^2+\alpha k}\zeta_2^{r k}
  e\Big(\frac{r\overline{t} k j}{p}\Big)=0.
$$
By summing over $j$ it follows that
\[
  \sum_{j=0}^{p-1}
  \sum_{k=0}^K \zeta_1^{\overline{4a}k^2+\alpha k}\zeta_2^{r k}
  e\Big(\frac{r\overline{t} k j}{p}\Big)
  =
  \sum_{k=0}^K \zeta_1^{\overline{4a}k^2+\alpha k}\zeta_2^{r k}
  \sum_{j=0}^{p-1}
  e\Big(\frac{r\overline{t} k j}{p}\Big)=0.
\]
However, as $K<p$,  this result can not be true since the innermost sum equals $p$ for $k=0$ and it vanishes in the rest of the cases. 
This contradiction shows that the only case to be considered is $p\nmid s$.
\medskip

After the previous simplification, the following proofs of the theorems can be presented.
\medskip

\begin{proof}[Proof of Theorem~\ref{existence_p}]
 After all of the previous reductions, it follows that the vanishing of $S_{\pm}(x)$ translates into the vanishing of a sum $S$ identical to that of \eqref{Sform} but changing $\overline{4a}$ by $1$.  Recall that $s$ is the denominator of $\Lambda N$ and $p\nmid s$. Corollary \ref{cyclop}
 cannot be applied directly by defining a function of the form
 \[
  P(x)=\sum_{k=0}^K 
  x^{k^2+\alpha k}e\Big(\frac{k}{s}\Big).
 \]
 This polynomial function $P(x)$ reduces to \eqref{simplifica} if $x=e(\frac{1}{p})$. However, a requirement of the Corollary \ref{cyclop} is that the function $f(k)=k^2+\alpha k$ is such that $0<f(k)<\text{deg}\;C_p=p-1$, which is not assured here. This technical problem can be avoided if one realizes the freedom to choose $f(k)$ modulo $p$. One may try to define $g(k)$ as the remainder of $f(k)$ when divided by $p$. However, one needs this remainder to be less than $p-1$, and this is not assured, as the remainder may be equal to $p-1$. On the other hand, as this does not cover all the values one may shift them to be sure  that the value $p-1$ is not reached.

In order to be more concrete, the function $f(x)= k^2+\alpha k$ is not bijective modulo $p$ (note that $f(0)=f(-\alpha)$ for $\alpha\ne 0$). So, we can find a value $b\not\in\text{Im}\, f$ that is, $b$ is a value such that $b\neq f(k)$ for any $k$. Define $g(k)$ as the remainder when $f(k)+p-1-b$ is divided by $p$. The choice of $b$ shows $g(k)\ne p-1$ therefore $0\le g(k)<p-1$. 
 In order to exemplify this, consider  $p=5$. Then the image of  $f(x)=x^2+3x$ modulo 5 does not contain $1$. By defining  $g(x)$ as the remainder of  $x^2+3x+4-1$ it can not be  $4$, since otherwise the remainder of  $x^2+3x$ will be equal to $1$. The value of $b$ in this case is precisely $b=1$, which is not in the image of $f(k)$.

 If $S=0$, then by applying Corollary~\ref{cyclop} with $n_1=p$, $n_2=s$ and $e_2$ the inverse of $\beta$ modulo $s$ it is deduced 
 \[
  \sum_{k=0}^K 
  x^{g(k)}e\Big(\frac{k}{s}\Big)
  =
  0.
 \]
 As $g$ is given by a quadratic function, each value in its image can be taken~$1$ or~$2$ times. The first case is impossible because if $g(k_0)$ is taken only once, the coefficient of $x^{g(k_0)}$ would be $e(k_0/s)\ne 0$. In the second case, if $g(k_0)=g(k_1)$ with $k_0\ne k_1$, the coefficient of $x^{g(k_0)}$ is
 \begin{equation}\label{cancels}
  e\Big(\frac{k_0}{s}\Big)+e\Big(\frac{k_1}{s}\Big)=0.
 \end{equation}
Furthermore, given a quadratic equation $x^2+ax+b$ modulo $p$ with $k_0$ and $k_1$ its solutions, then $x^2+ax+b$ and $(x-k_0)(x-k_1)$
are equal modulo $p$. By comparing linear coefficients, it follows that $a$ and $-k_0-k_1$ are equal modulo $p$. This discussion shows that $k_0+k_1=\ell p-\alpha$ with $\ell$ some integer. In particular, for $k_0=0,\, 1$, 
 \[
  1+e\Big(\frac{\ell p-\alpha}{s}\Big)=0
  \qquad\text{and}\qquad
  e\Big(\frac{1}{s}\Big)+ e\Big(\frac{\ell p-\alpha}{s}\Big)e\Big(-\frac{1}{s}\Big)=0. 
 \]
 This implies $e(\frac{2}{s})=1$ and $s=2$. Recalling that $s$ is the denominator of $\Lambda N$ it is deduced that $2\Lambda N$ is an odd integer. On the other hand, Theorem~\ref{nofr} assures that actually in this case there are plateaux. 
 
 Finally, to deduce the claimed uniqueness, we are going to show that the existence of two plateaux leads to a contradiction. 
 Consider first the case in which they have the same character in Proposition~\ref{criterion}: 
 $S_{+}$ vanishes in both plateaux
 or
 $S_{-}$ vanishes in both plateaux.
 Substituting $s=2$ in the initial reductions, which forces $\beta$ to be odd, the existence of two plateaux implies
 \[
  \sum_{k=0}^K
  e\Big(
  \frac{k^2+\alpha k}{p}
  +
  \frac{k}{2}
  \Big)
  =
  \sum_{k=K'}^{K''}
  e\Big(
  \frac{k^2+\alpha k}{p}
  +
  \frac{k}{2}
  \Big)
  =0
 \]
 for some $0<K<K'<K''<p$. According with the previous argument, the only way of getting the vanishing of these sums is that the terms in each of them cancel in pairs satisfying $k_0+k_1=\ell p-\alpha$ and this must be odd to fulfill \eqref{cancels} with $s=2$. 
 As $k_0,k_1,\alpha\in [0,p-1]$, this determines $\ell$ completely: $\ell=1$ if $2\mid \alpha$ and $\ell=2$ if $2\nmid\alpha$. Summing up, the intervals $[0,K]$ and $[K',K'']$ should be both symmetric with respect to $(\ell p-\alpha)/2$ and it is impossible because they do not overlap. 
 
 Now consider the case in which both plateaux have different character: $S_{+}$ vanishes only in one of them and $S_{-}$ only in the other. 
 They cannot invade the boundary of $[0,1/2]$ because it would contradict the assumption with a simultaneous vanishing of $S_{+}$ and $S_{-}$ there (by the last part of Proposition~\ref{criterion} and the vanishing of 
 the probability density at $x=0,1/2$ by the symmetries of $\Psi$). Let $c_+$ and $c_-$ be the centers of the two plateaux. Applying the reductions without the translation, 
 \[
  \sum_{k\in I(c_+)}
  e\Big(
  \frac{k^2+\alpha k}{p}
  +
  \frac{k}{2}
  \Big)
  =
  \sum_{k\in I(c_-)}
  e\Big(
  \frac{k^2-\alpha k}{p}
  +
  \frac{k}{2}
  \Big)
  =0.
 \]
 The center of $I(c)$, its axis of symmetry, is $qc$. Hence, 
 $pc_{\pm}=\frac12 (\ell_{\pm} p\mp\alpha)$
 for some $\ell_+,\ell_-\in\Z$ with $\ell_{\pm} p\mp\alpha$ odd. This gives $c_++c_-\in\Z$ and contradicts $0<c_+,c_-<1/2$. 
\end{proof}

\begin{proof}[Proof of Theorem~\ref{qcomp}]
 Say that $p_1p_2\cdots p_n$ is the factorization in distinct primes of $q$ and write $q_j=q/(p_1\cdots p_j)$ for $1\le j\le n$.
 
 Assume first that $q$ is odd. Applying the preliminary reductions leading to \eqref{simplifica} with $s=1$ (because $N\Lambda\in\Z$), the nonexistence of plateaux is deduced proving that 
 \[
  S_0=\sum_{k\in \mathcal{C}_0}
  e\Big(\frac{k^2+\alpha k}{q}\Big)
  \qquad\text{with}\quad
  \mathcal{C}_0=\{0,1,2\dots, K\}
 \]
 cannot be zero.
 Define for $1\le j\le n$
 \[
  S_j=\sum_{k\in \mathcal{C}_j}
  e\Big(\frac{k^2+\alpha k}{q_j}\Big)
  \qquad\text{with}\quad
  \mathcal{C}_j=\big\{k\in \mathcal{C}_{j-1}\,:\, p_j\text{ divides }k^2+\alpha k\big\}. 
 \]
 Note that $0\in \mathcal{C}_j$, so they are nonempty. 
 Proving that $S_{j-1}=0$ implies $S_j=0$ we are done because $S_0=0$ would give inductively the contradiction $S_n=\sum_{k\in \mathcal{C}_n}1=0$. 
 
 Using B\'ezout's identity, there exist $a,b\in\Z$ with $\gcd(a,p_j)=\gcd(b,q_j)=1$ and $aq_j+bp_j=1$. Then 
 \[
  S_{j-1}=
  \sum_{k\in \mathcal{C}_{j-1}}
  e\Big(a\frac{k^2+\alpha k}{p_j}\Big)
  e\Big(b\frac{k^2+\alpha k}{q_j}\Big).
 \]
 If $S_{j-1}=0$, by Proposition~\ref{cyclo} with $m_1$ and $m_2$ the inverses of $a$ and $b$ modulo $p_j$ and $q_j$, the sum still vanishes when $a$ and $b$ are omitted. As we saw in the proof of Theorem~\ref{existence_p}, there exists $0\le b_j<p_j$ such that $k^2+\alpha k +p_j-1-b_j$ has remainder $g(k)<p_j-1$ when divided by $p_j$ and by Corollary~\ref{cyclop} it is deduced 
 \[
  \sum_{k\in \mathcal{C}_{j-1}}
  x^{g(k)}
  e\Big(\frac{k^2+\alpha k}{q_j}\Big)=0. 
 \]
 The vanishing of the coefficient of $p-1-b_j$ implies $S_j=0$ as expected because $g(k)=p_j-1-b_j$ if and only if $p_j$ divides $k^2+\alpha k$. This concludes the proof of the odd case. 
 \medskip
 
 The even case only differs in the first step. If $q$ is even, $q/2$ is odd and \eqref{coef} restricts $k$ to odd values. Proceeding as in \eqref{simplifica} but under this restriction and with the new formula of $c_{a/q}(k)$, we want to prove that it is not possible to have 
 \[
  \sum_{\substack{k=1\\ k\text{ odd}}}^L
  e\Big(\frac{k^2}{4q}+\frac{\alpha k}{q}\Big)=0
  \qquad\text{with}\quad
  L = 2K+1\ge 1. 
 \]
 Changing $k$ by $2k+1$, this can be re-written as 
 \[
  S_0=\sum_{k\in \mathcal{C}_0}
  e\Big(\frac{k^2+(4\alpha+1) k}{q}\Big)=0
  \qquad\text{with}\quad
  \mathcal{C}_0=\{0,1,2\dots, K\}
 \]
 and we can proceed as before. It is interesting to note that $k^2+(4\alpha+1) k$ is constantly zero modulo~$2$ and then the step dealing with the prime $2$ is rather trivial. 
\end{proof}

We are also able to deal with the extension of Conjecture~\ref{existence} to $\Lambda\not\in\Q$ proving the absence of plateaux for algebraic irrational values of $\Lambda$. 
Our proof depends on a celebrated deep result in number theory. 

\begin{theorem}\label{Lalgebr}
 If there is not fragmentation and $\Lambda$ is an irrational algebraic value then there are not plateaux. 
\end{theorem}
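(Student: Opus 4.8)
The plan is to deduce the statement from the criterion of Proposition~\ref{criterion} together with a transcendence input. First I would record that the set of singular values of $x$ in $[0,1/2]$ is finite: the endpoints of $I(x)=[-\tfrac{q}{2\Lambda}+xq,\tfrac{q}{2\Lambda}+xq]$ are affine functions of $x$ with nonzero slope, so each prescribed integrality condition on an endpoint is met at a single $x$, and only finitely many integers are relevant in a bounded range. Consequently any plateau of positive length contains a nonsingular point, so it suffices to show that for every nonsingular $x\in(0,1/2)$ one has $S_{+}(x)\neq0$ and $S_{-}(x)\neq0$; by Proposition~\ref{criterion} this forbids local constancy of $p$ at every nonsingular point, hence forbids plateaux (the endpoints $x=0,1/2$, where $p$ vanishes by symmetry, are covered too, since a nondegenerate plateau through them would still meet a nonsingular interior point).

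Fix such an $x$ and set $\xi=e(N\Lambda/q)$, so that $S_{\pm}(x)=\sum_{k\in I(x)}c_{a/q}(k)\,\xi^{\pm k}$. By \eqref{coef} the nonzero coefficients $c_{a/q}(k)$ are algebraic numbers: for $q$ odd each $c_{a/q}(k)$ is a root of unity, and for $q$ even the nonvanishing ones equal $\sqrt2$ times a root of unity while the ones with $k+\tfrac q2$ odd simply drop out of the sum. In the non fragmentation regime $|I(x)|=q/\Lambda>1$ for $q$ odd and $|I(x)|=q/\Lambda>2$ for $q$ even, so $I(x)$ always contains at least one index contributing a nonzero term. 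Since distinct integers $k\in I(x)$ yield distinct monomials $\xi^{\pm k}$, no cancellation of monomials occurs and there is a \emph{nonzero} Laurent polynomial $P\in\overline{\Q}[X,X^{-1}]$ (depending on $x$) with $S_{\pm}(x)=P(\xi)$.

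The crucial point is that $\xi$ is transcendental. Indeed, $2N\Lambda/q$ is algebraic (since $\Lambda$ is algebraic and $2N/q\in\Q$) and irrational (since $\Lambda$ is irrational and $2N/q\neq0$), so the Gelfond--Schneider theorem---the celebrated resolution of Hilbert's seventh problem---applied to $a=-1$ and $b=2N\Lambda/q$ shows that every value of $(-1)^{2N\Lambda/q}$, in particular $\xi=e^{2\pi i N\Lambda/q}$, is transcendental. A nonzero Laurent polynomial with algebraic coefficients cannot vanish at a transcendental argument: multiplying $P$ by its Galois conjugates (over $\Q$) produces a nonzero element of $\Q[X,X^{-1}]$ having $\xi$ as a root, contradicting transcendence of $\xi$. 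Hence $S_{\pm}(x)=P(\xi)\neq0$ for every nonsingular $x$, and there are no plateaux.

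I expect the only genuine obstacle to be the transcendence step; the rest is bookkeeping. The subtle point deserving care is that the $c_{a/q}(k)$ are algebraic but not rational, so it really does not suffice to know that $\xi$ is merely not a root of unity: one needs full transcendence of $\xi$, which is exactly why Gelfond--Schneider (rather than an elementary irrationality argument) enters.
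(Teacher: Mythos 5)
Your proposal is correct and follows essentially the same route as the paper: reduce via Proposition~\ref{criterion} to the vanishing of a nonzero (Laurent) polynomial with algebraic coefficients $c_{a/q}(k)$ at $e(\pm N\Lambda/q)$, and rule this out by Gelfond--Schneider. The only differences are cosmetic --- you invoke Gelfond--Schneider with $\alpha=-1$, $\beta=2N\Lambda/q$ instead of the paper's $\alpha=e(\pm\frac{1}{2q})$, $\beta=2N\Lambda$, and you spell out the nonsingularity and non-cancellation points that the paper leaves implicit.
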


Before proving Theorem~\ref{Lalgebr}, recall that algebraic numbers are defined as the roots of polynomials with rational coefficients, they include all the nested expressions with radicals and the elementary operations,  for instance $\sqrt{2}$, $\sqrt{2}-\sqrt[4]{3}$  or $\sqrt[3]{7+\sqrt{2}}/\sqrt{1+3\sqrt{5}}$. After the work of Abel and Galois, it is known that there are algebraic numbers not admitting this kind of expressions. It is also known that a root of a polynomial with algebraic coefficients is also algebraic (this derives from elementary considerations about what is called the degree of a field extension). 
\medskip

Gelfond-Schneider theorem is one of the most celebrated results in transcendence theory, it solves Hilbert's 7th problem \cite{hille} (which he considered ``very difficult''). It states that if $\alpha\ne0,1$ and $\beta$ are (real or complex) algebraic numbers with $\beta\not\in\Q$, then $\alpha^\beta$ is not algebraic. 
Two examples mentioned by Hilbert himself are $\alpha=2$, $\beta=\sqrt{2}$ and $\alpha=e^{i\pi/2}$, $\beta=-2i$ showing that $2^{\sqrt{2}}$ and $e^\pi$ are not roots of a polynomial with rational (or even algebraic) coefficients. 

This deep result allows to extend Conjecture~\ref{existence} to irrational algebraic values.

\begin{proof}[Proof of Theorem~\ref{Lalgebr}]
 According to Proposition~\ref{criterion} to have a plateau
 $S_{+}(t)=0$ or $S_{-}(t)=0$ for some fixed~$t$. This means that $X_0=e\big(\pm \frac{N\Lambda}{q}\big)$ is a root of the polynomial $P(X)= \sum_{k\in I}c_{a/q}(k)X^k$ with $I=I(t)$. If there is not fragmentation, the sum is nonempty and $P$ is not identically zero. Its coefficients $c_{a/q}(k)$  are algebraic numbers because they satisfy $X^q-1=0$ if $q$ is odd and $X^{4q}-2^{2q}=0$ when $q$ and $q+k/2$ are even. Hence, $X_0$ is algebraic and it contradicts Gelfond-Schneider theorem taking $\alpha=e\big(\pm\frac{1}{2q}\big)$ and $\beta=2N\Lambda$.
\end{proof}

The set of algebraic irrational numbers is very wide but it does not include all the real irrational numbers (as shown in 1844 by Liouville). Then a loose end is to know if there is a counterexample dropping the word ``algebraic'' in Theorem~\ref{Lalgebr}. This is equivalent to ask if the polynomial $P$ in  the previous proof may have a root on the unit circle which is not a root of the unity. A famous example due to Lehmer implies that~$8$ out the~$12$ roots of $x^{12}-x^7-x^6-x^5+1$ satisfy this property. Examples like this  and some theoretical results \cite{daileda} seem to suggest that counterexamples to an extension of Theorem~\ref{Lalgebr} for non algebraic values of $\Lambda$ might exist, although we have not a clear conjecture about it. 

We have some minor results supporting other cases of Conjecture~\ref{existence} that we do not consider relevant enough to be reflected here. We think that some known results about linear relations between roots of the unity 
\cite{mann}
\cite{LaLe}
may play a role here.

\section{Numerical examples}

Let us first consider examples of the fragmentation case. Take $\Lambda=10.7$, for $q=7,\,12$ and $10$ there is fragmentation because $\Lambda>7$, 
$\Lambda>6$ and 
$\Lambda>5$, respectively. 
In Figure~\ref{fig_frag} it is shown the graph of $p$ for the choices
$a/q=2/7$, $N=1$;
$a/q=1/12$, $N=2$ and
$a/q=3/10$, $N=1$
which exemplify the three cases in 
Theorem~\ref{fragmen}.

\begin{figure}[H]
 \begin{center}
 \begin{tabular}{c}
  \includegraphics[height=85pt]{./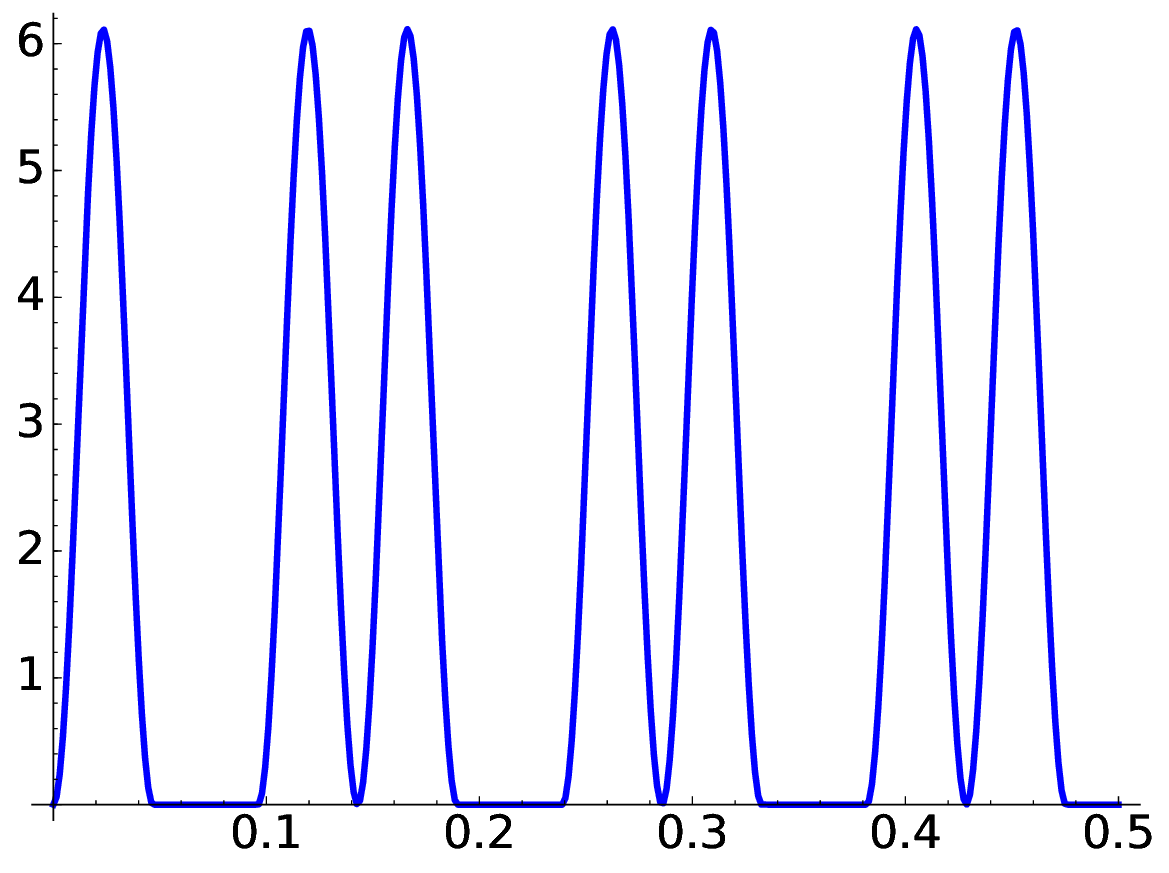}\hspace{-2pt}(a)
 \end{tabular}
 \ 
 \begin{tabular}{c}
  \includegraphics[height=85pt]{./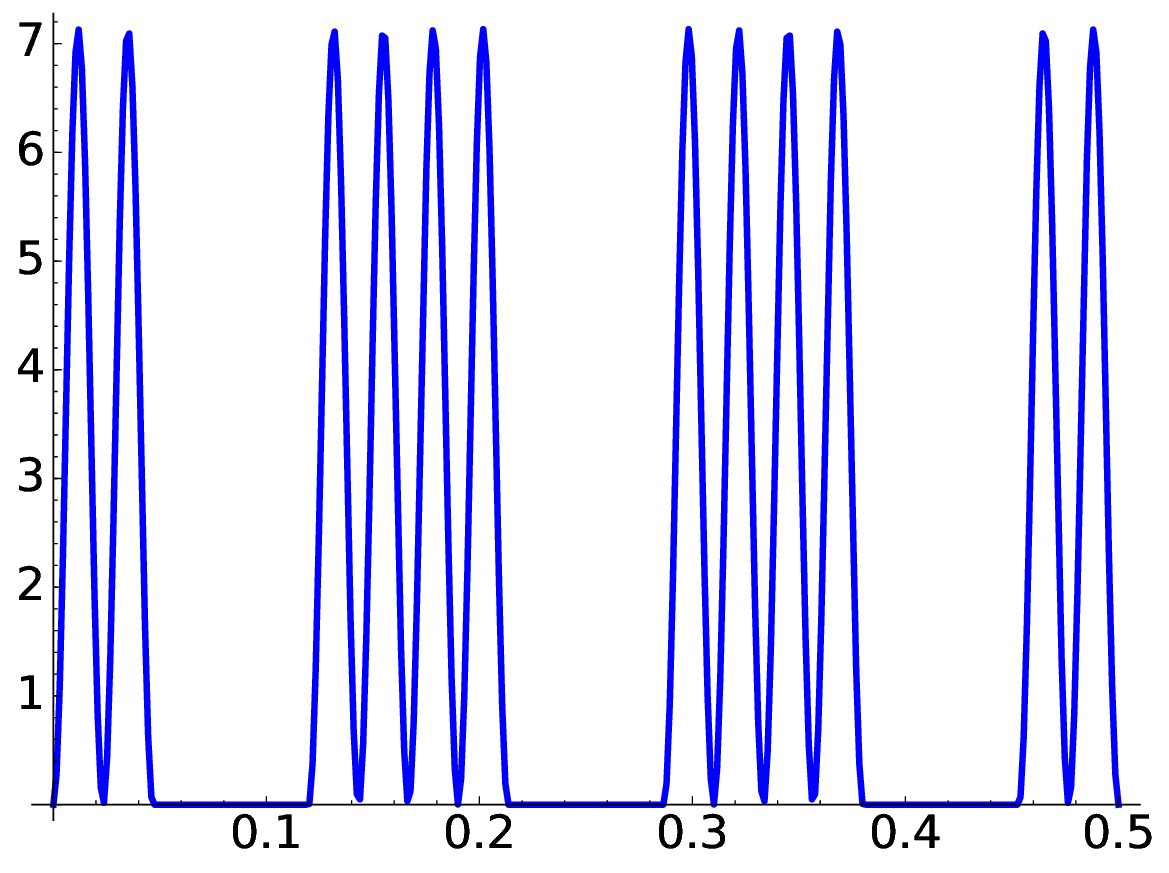}\hspace{-2pt}(b)
 \end{tabular}
 \ 
 \begin{tabular}{c}
  \includegraphics[height=85pt]{./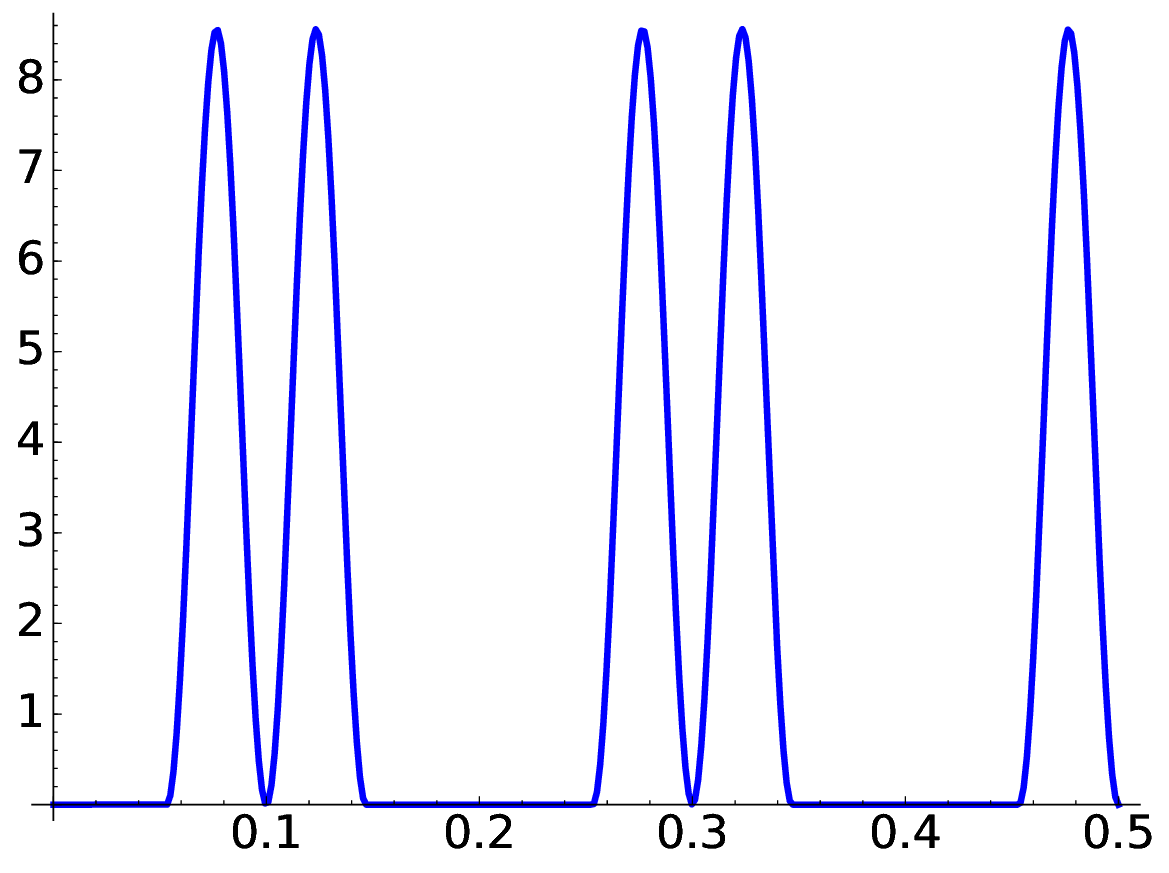}\hspace{-2pt}(c)
 \end{tabular}\vspace{-10pt}
 \end{center}
 \caption{Fragmentation case. 
 (a) $\Lambda=10.7$, $a/q=2/7$, $N=1$.
 (b) $\Lambda=10.7$, $a/q=1/12$, $N=2$.
 (c) $\Lambda=10.7$, $a/q=3/10$, $N=1$.
 }\label{fig_frag}
\end{figure}

In the rest of the examples, there is not fragmentation. In the figures, the center of the unique plateau is marked with a solid line and the boundaries in $(0,1/2)$ with dashed lines. 
\medskip

Consider 
$\Lambda=5/2$, $a/q=1/3$, $N=1$. According to Theorem~\ref{existence_p}, there is only a plateau. 
By Theorem~\ref{nofr}, its center and radius are
\[
 c= D\Big(\frac{13}{6}\Big)=\frac 16
 \qquad\text{and}\qquad
 r = \frac{1}{3}D\Big(\frac{11}{10}\Big)=\frac{1}{30}. 
\]
In Figure~\ref{nfig_frag1} they are also considered the cases 
$\Lambda=5/2$, $a/q=13/18$, $N=3$ and
$\Lambda=5/4$, $a/q=11/6$, $N=2$ with $q$ even.  In total, in these three cases the plateaux deduced from the formulas are 
\[
 \big[
 \frac{2}{15},\frac{1}{5}
 \big],
 \qquad
 \big[
 \frac{3}{10},\frac{11}{30}
 \big],
 \qquad\text{and}\qquad
 \big[
 \frac{7}{30},\frac{13}{30}
 \big],
\]
which agree with the plots. 

\begin{figure}[H]
 \begin{center}
 \begin{tabular}{c}
  \includegraphics[height=85pt]{./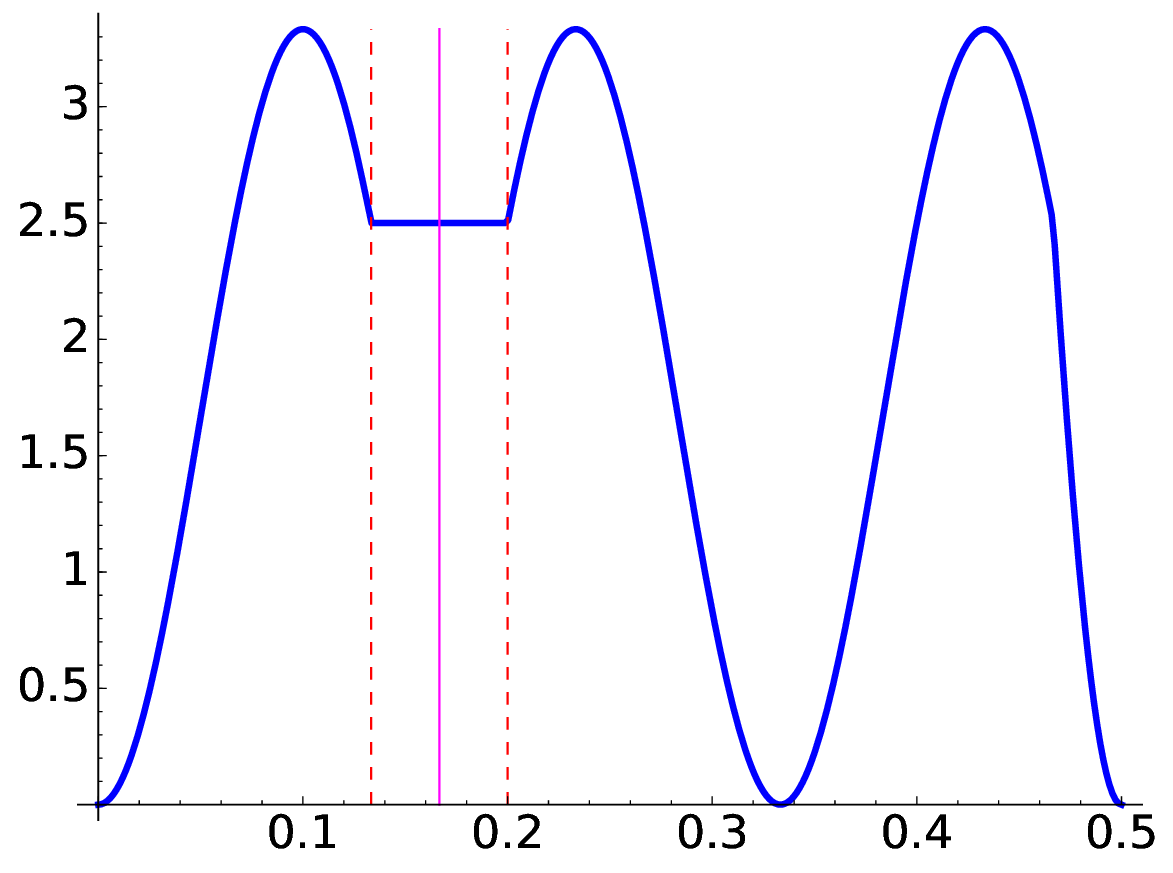}\hspace{-2pt}(a)
 \end{tabular}
 \ 
 \begin{tabular}{c}
  \includegraphics[height=85pt]{./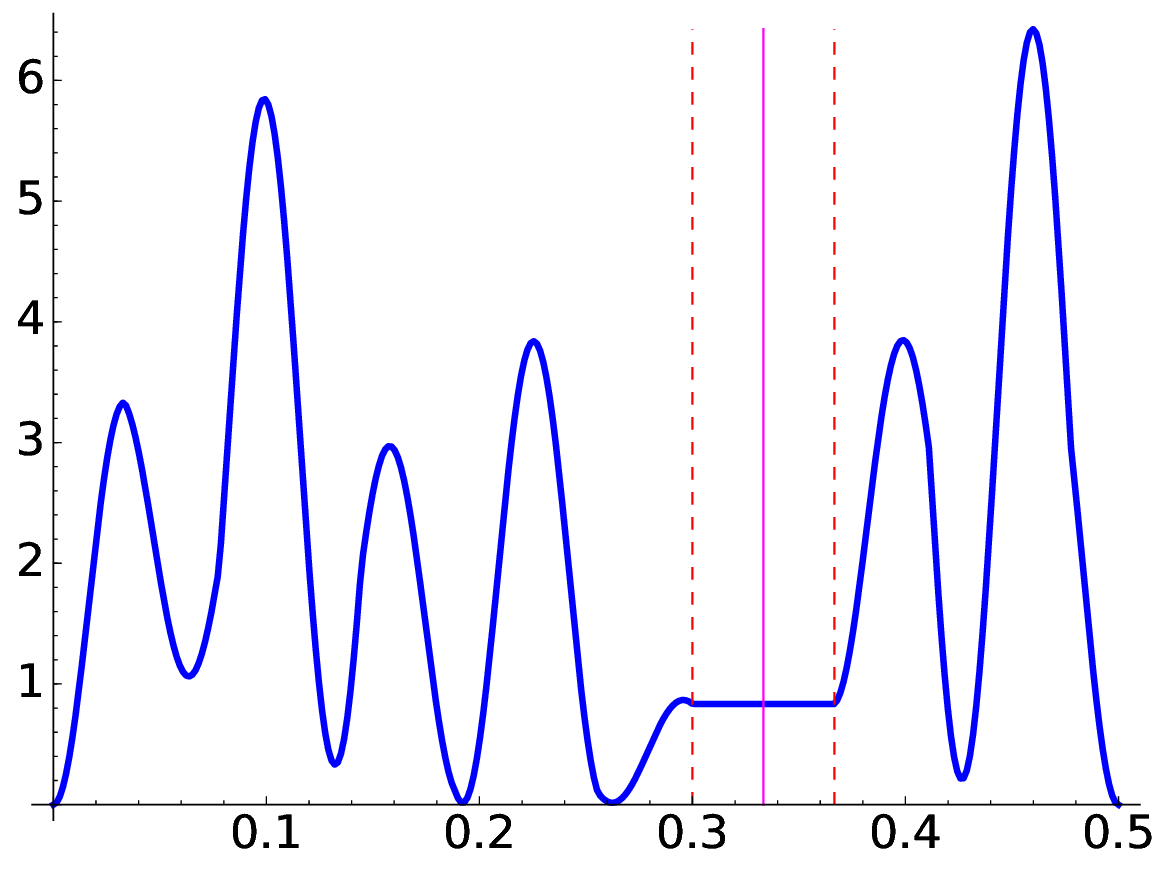}\hspace{-2pt}(b)
 \end{tabular}
 \ 
 \begin{tabular}{c}
  \includegraphics[height=85pt]{./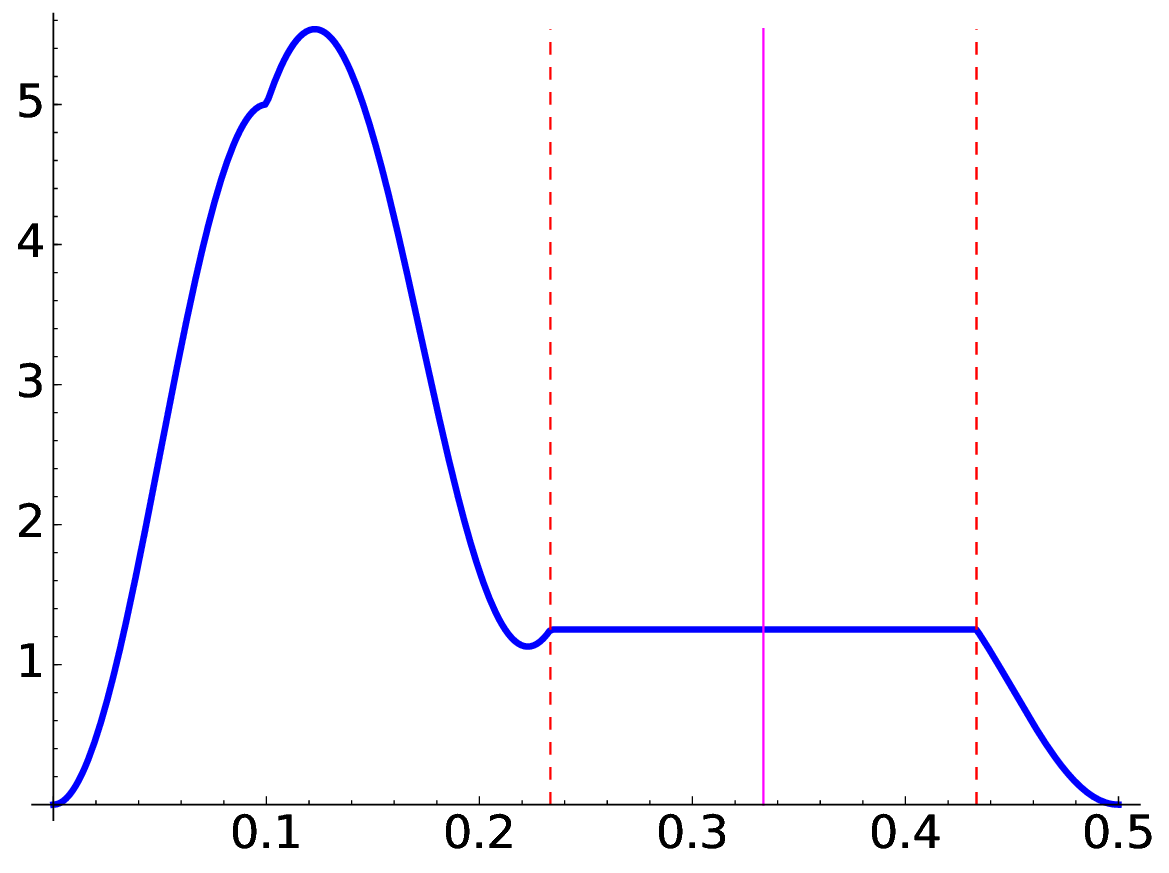}\hspace{-2pt}(c)
 \end{tabular}\vspace{-10pt}
 \end{center}
 \caption{No fragmentation and nonzero level plateaux. 
 (a) $\Lambda=5/2$, $a/q=1/3$, $N=1$.
 (b) $\Lambda=5/2$, $a/q=13/18$, $N=3$.
 (c) $\Lambda=5/4$, $a/q=11/6$, $N=2$.
 }\label{nfig_frag1}
\end{figure}

In Figure~\ref{nfig_frag2} there are more examples with $N>1$. As suggested by Lemma~\ref{bformula}, higher values of $N\Lambda$ give more oscillations.  

\begin{figure}[H]
 \begin{center}
 \begin{tabular}{c}
  \includegraphics[height=85pt]{./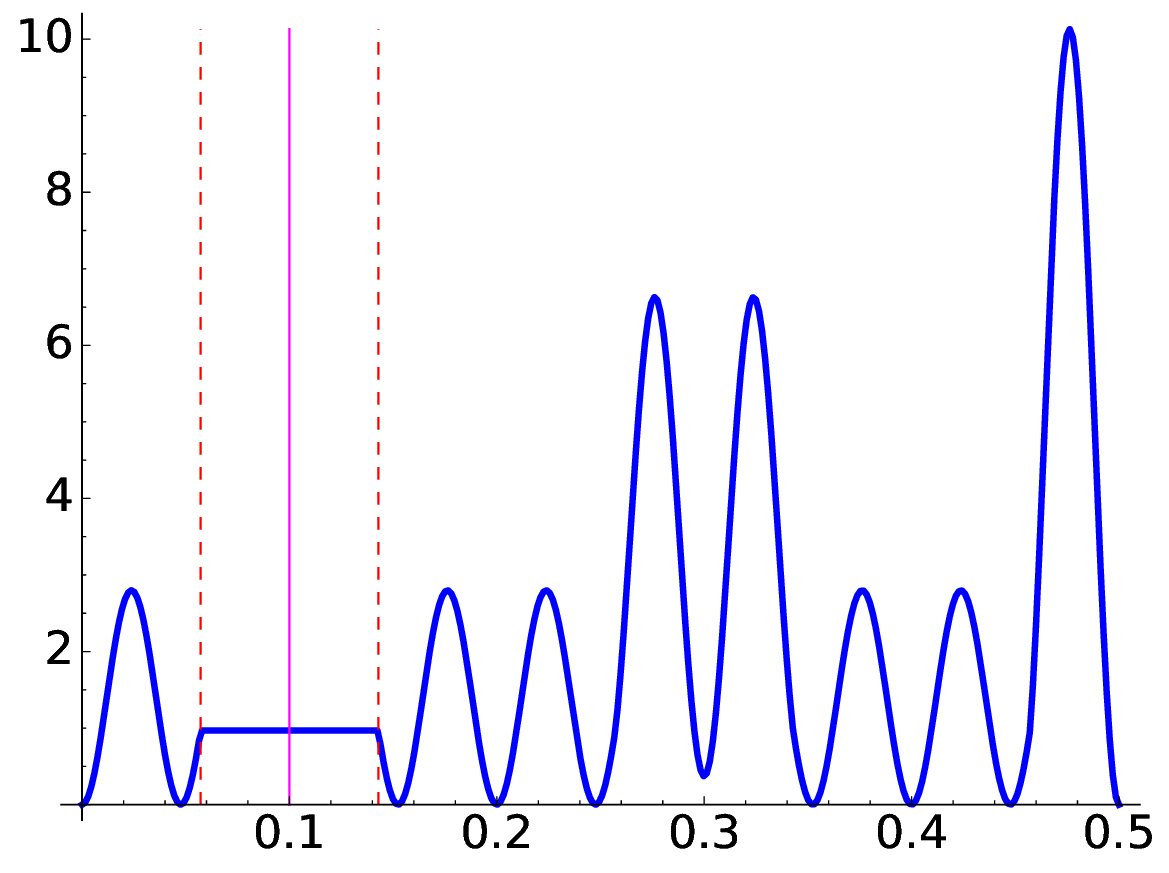}\hspace{-2pt}(a)
 \end{tabular}
 \ 
 \begin{tabular}{c}
  \includegraphics[height=85pt]{./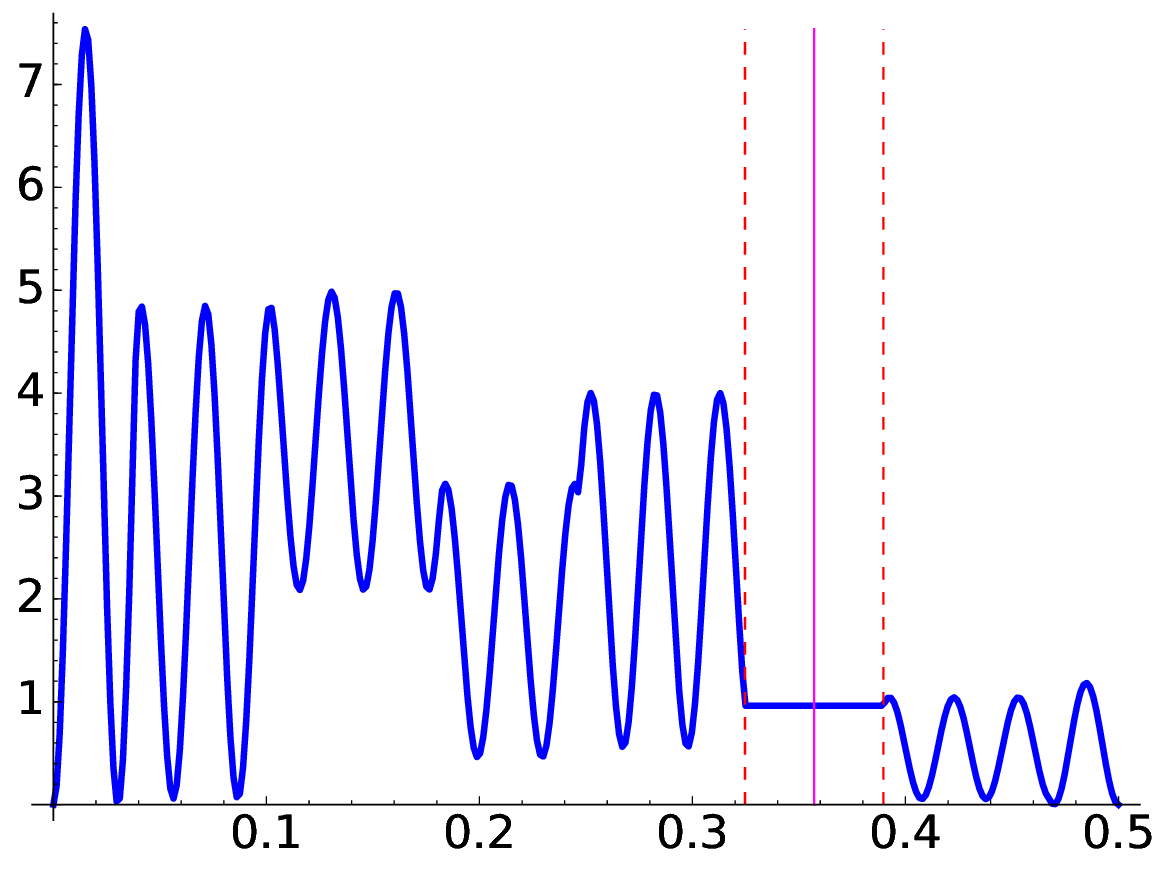}\hspace{-2pt}(b)
 \end{tabular}
 \ 
 \begin{tabular}{c}
  \includegraphics[height=85pt]{./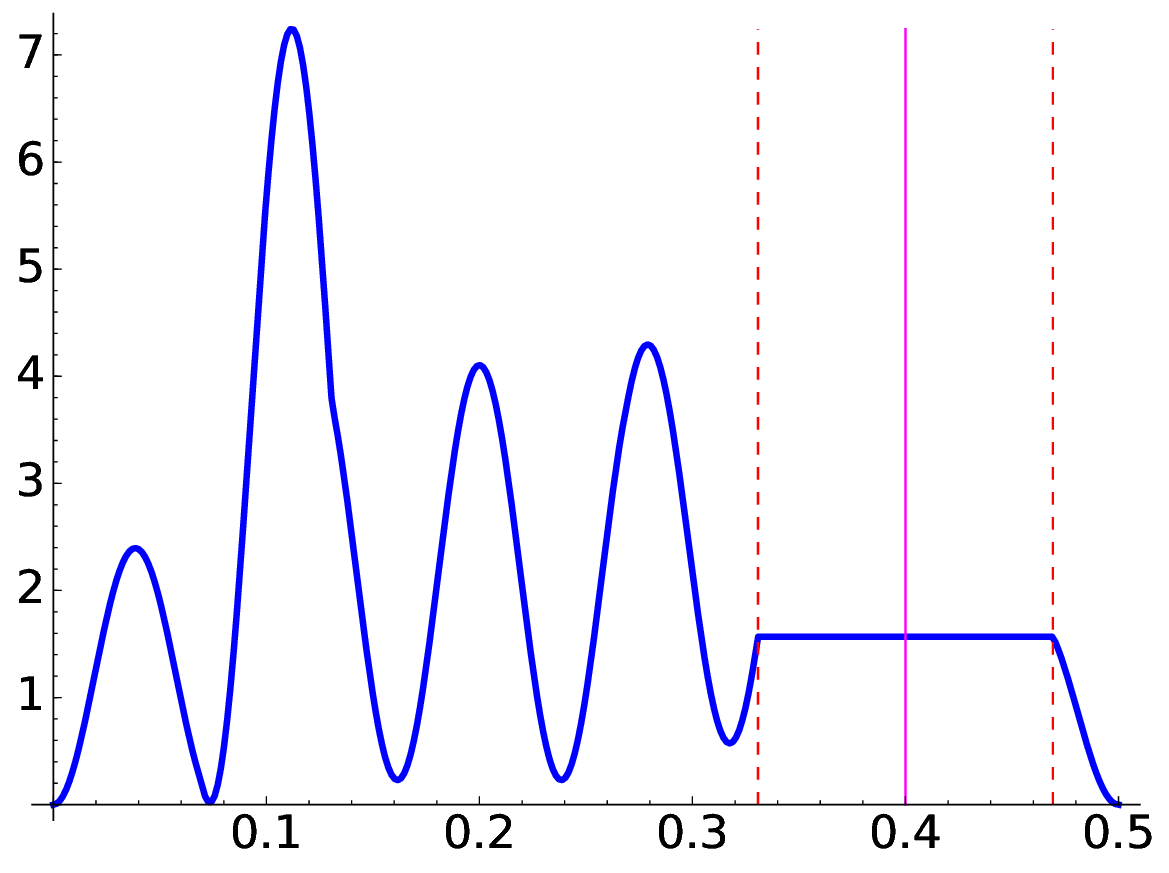}\hspace{-2pt}(c)
 \end{tabular}\vspace{-10pt}
 \end{center}
 \caption{More examples of no fragmentation and nonzero level plateaux.
 (a) $\Lambda=7/2$, $a/q=8/5$, $N=3$.
 (b) $\Lambda=11/4$, $a/q=3/7$, $N=6$.
 (c) $\Lambda=13/6$, $a/q=7/10$, $N=3$.
 }\label{nfig_frag2}
\end{figure}

Finally, in Figure~\ref{nfig_frags} it is illustrated the case of $0$-level plateaux without fragmentation. Fix $\Lambda =3/2$. Then the conditions in Theorem~\ref{nofr}
read $q$ divides $3N$ and $q$ divides $6N$. So, $q=3$ with $N=1$ is a valid choice and, for the even case, $q=6$, $N=3$ and $q=18$, $N=3$ are also valid. In every case, the argument of $D$ in the formula for $r$ is a half integer. On the other hand, $\frac {2a}q N\Lambda+\frac 12$ is half integer for $q=7$ an integer in the even cases. So, according to the formulas, the plateaux are 
\[
 \big[
 \frac{1}{3},\frac{1}{2}
 \big],
 \qquad
 \big[
 0,\frac{1}{6}
 \big],
 \qquad\text{and}\qquad
 \big[
 0,\frac{1}{18}
 \big].
\]
Note that the plateaux appears to the right of $[0,1/2]$ in the odd case and to the left in the even case. 

\begin{figure}[H]
 \begin{center}
 \begin{tabular}{c}
  \includegraphics[height=85pt]{./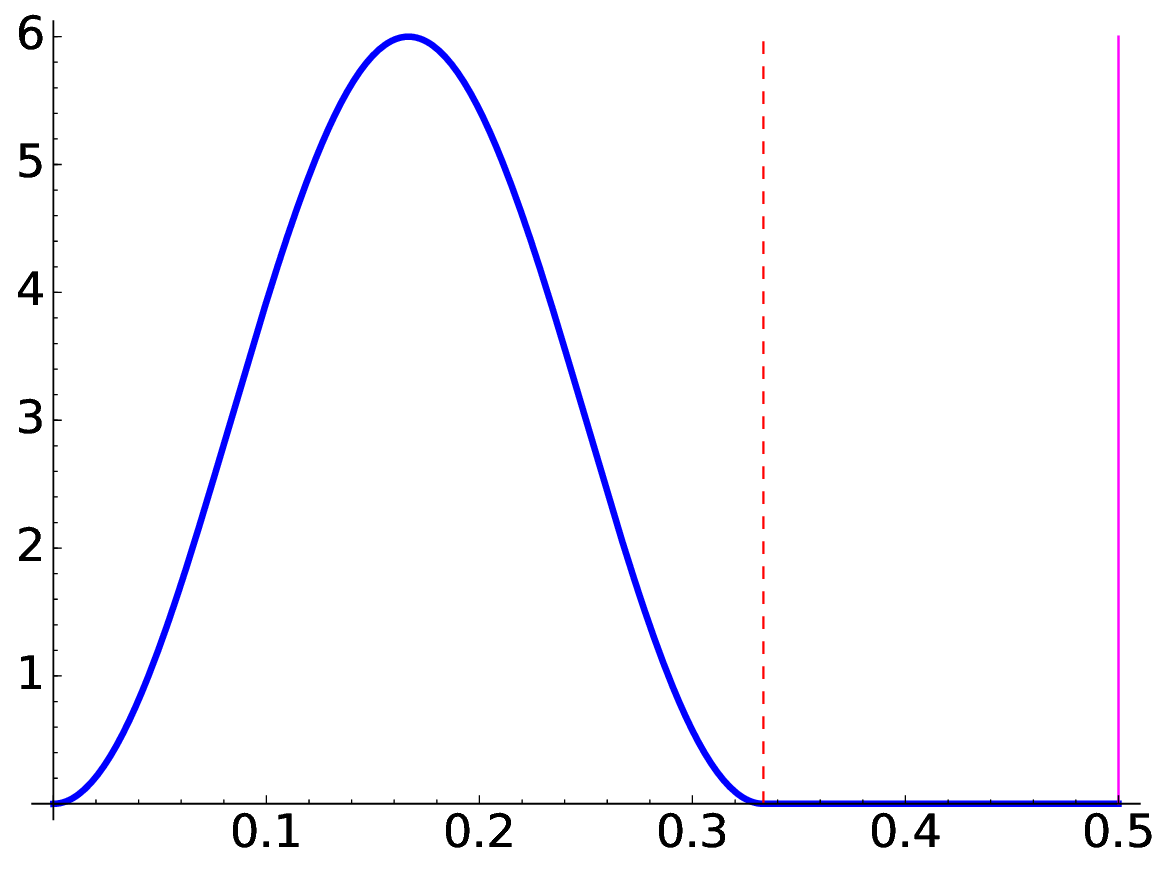}\hspace{-2pt}(a)
 \end{tabular}
 \ 
 \begin{tabular}{c}
  \includegraphics[height=85pt]{./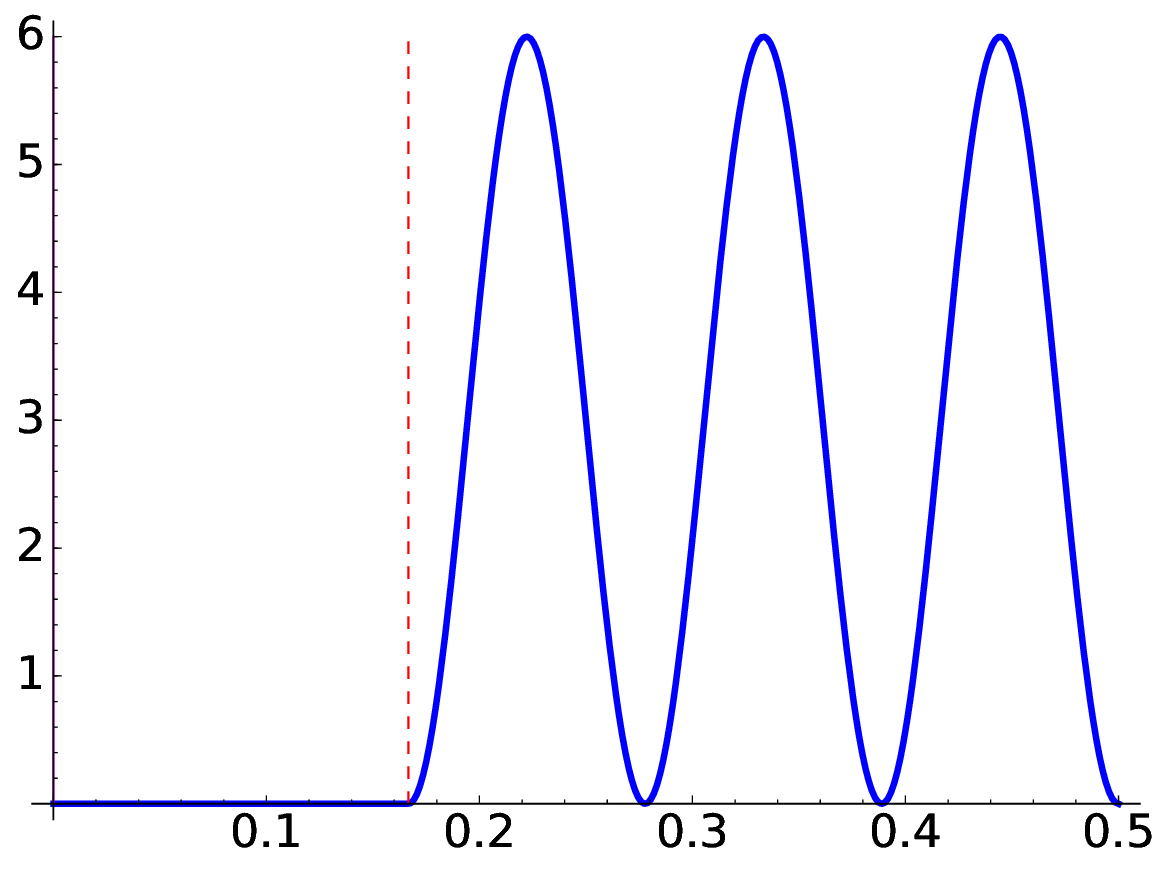}\hspace{-2pt}(b)
 \end{tabular}
 \ 
 \begin{tabular}{c}
  \includegraphics[height=85pt]{./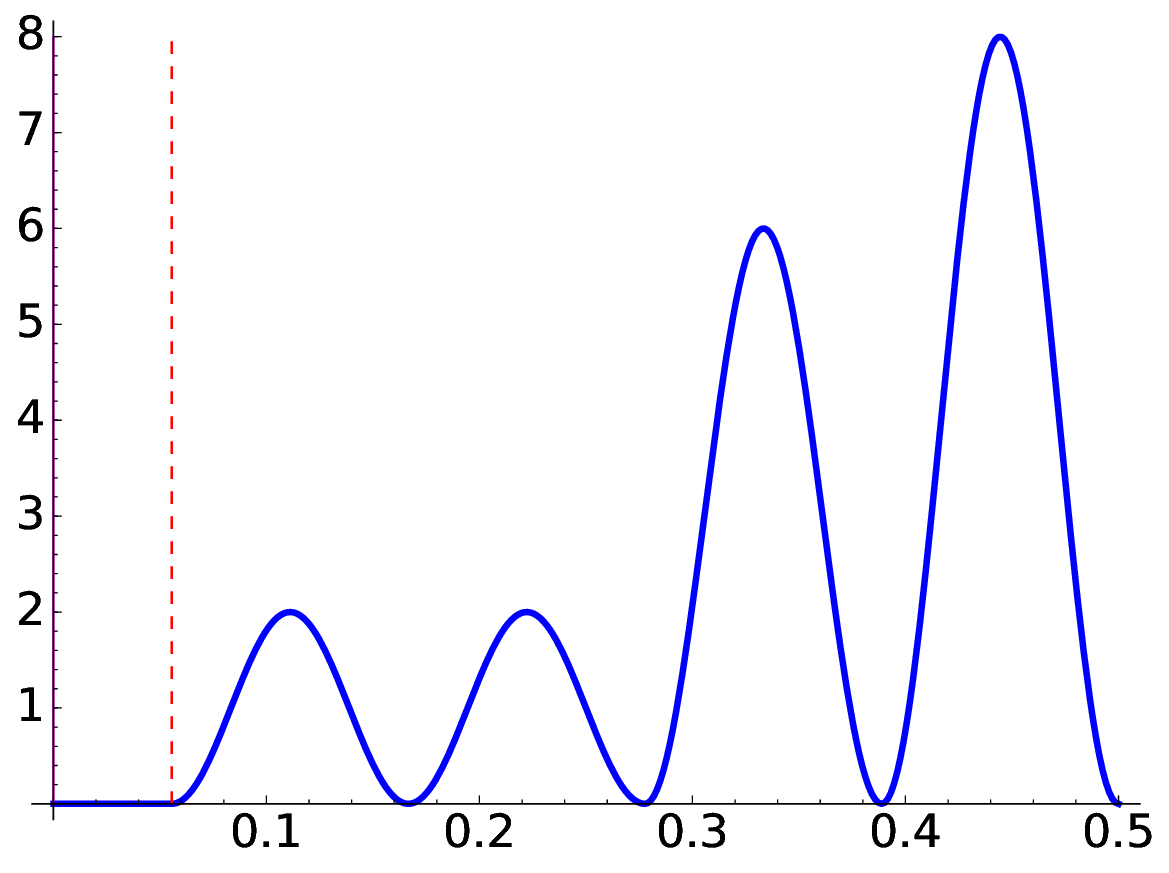}\hspace{-2pt}(c)
 \end{tabular}\vspace{-10pt}
 \end{center}
 \caption{$0$-level plateaux without fragmentation. 
 (a) $\Lambda=3/2$, $a/q=5/3$, $N=1$.
 (b) $\Lambda=3/2$, $a/q=1/6$, $N=3$.
 (c) $\Lambda=3/2$, $a/q=7/18$, $N=3$.
 }\label{nfig_frags}
\end{figure}

\section*{Acknowledgments}
O.P. S. is partially supported by CONICET, Argentina and by the Grant PICT 2020-02181. D. R. is partially supported by the PID2020-113350GB-I00 grant of the MICINN (Spain) and F. Ch. is partially supported by the latter grant and
by ``Severo Ochoa Programme for Centres of Excellence in R{\&}D'' (CEX2019-000904-S).


\begin{thebibliography}{10}

\bibitem{aslangul}
C.~Aslangul.
\newblock Surprises in the suddenly-expanded infinite well.
\newblock {\em J. Phys. A}, 41(7):075301, 23, 2008.

\bibitem{ChSa}
F.~Chamizo and O.~P. Santill\'{a}n.
\newblock About the quantum {T}albot effect on the sphere.
\newblock {\em J. Phys. A}, 56(25):Paper No. 255302, 20, 2023.

\bibitem{olver}
P.~J. Olver.
\newblock Dispersive quantization.
\newblock {\em Amer. Math. Monthly}, 117(7):599--610, 2010.

\bibitem{OlTs}
P.~J. Olver and E.~Tsatis.
\newblock Points of constancy of the periodic linearized {K}orteweg–de
  {V}ries equation.
\newblock {\em Proc. R. Soc. Lond., A, Math. Phys. Eng. Sci.}, 474(2217):19,
  2018.

\bibitem{HaBe}
J.~H. Hannay and M.~V. Berry.
\newblock Quantization of linear maps on a torus-{F}resnel diffraction by a
  periodic grating.
\newblock {\em Phys. D}, 1(3):267--290, 1980.

\bibitem{mann}
H.~B. Mann.
\newblock On linear relations between roots of unity.
\newblock {\em Mathematika}, 12:107--117, 1965.

\bibitem{LaLe}
T.~Y. Lam and K.~H. Leung.
\newblock On vanishing sums of roots of unity.
\newblock {\em J. Algebra}, 224(1):91--109, 2000.

\bibitem{hille}
E.~Hille.
\newblock Gelfond's solution of {H}ilbert's seventh problem.
\newblock {\em Amer. Math. Monthly}, 49:654--661, 1942.

\bibitem{daileda}
R.~C. Daileda.
\newblock Algebraic integers on the unit circle.
\newblock {\em J. Number Theory}, 118(2):189--191, 2006.

\bibitem{weintraub}
S.~H. Weintraub.
\newblock Several proofs of the irreducibility of the cyclotomic polynomials.
\newblock {\em Amer. Math. Monthly}, 120(6):537--545, 2013.

\end{thebibliography}

\end{document}